\def\R{\mathbb{R}}
\newcommand{\commentout}[1]{}
\newif\ifcomments
\newcommand{\colorthis}[2]{{[\color{#1}{#2}]}}
\newcommand{\colorthis}[2]{}
\newcommand{\jo}[1]{\colorthis{blue}{Joel: #1}}
\newtheorem{theorem}{Theorem}
\newtheorem{corollary}[theorem]{Corollary}
\newtheorem{lemma}[theorem]{Lemma}
\newtheorem{proposition}[theorem]{Proposition}
\newtheorem{definition}[theorem]{Definition}
\newtheorem{claim}[theorem]{Claim}
\newtheorem{observation}[theorem]{Observation}
\newenvironment{proof-sketch}{\noindent{\bf Sketch of Proof}\hspace*{1em}}{\qed\bigskip}
\title{The Pricing War Continues: On Competitive Multi-Item Pricing}
\author{Omer Lev\inst{1}\and 
Joel Oren\inst{2}\and
Craig Boutilier\inst{2}\and
Jeffery S. Rosenschein\inst{1}
}
\begin{document}

\institute{
The Hebrew University, Israel\\
\email{omerl,jeff@cs.huji.ac.il}
\and
University of Toronto, Canada\\
\email{oren,cebly@cs.toronto.edu}}
\maketitle

\begin{abstract}
We study a game with \emph{strategic} vendors who own multiple items and a single buyer with a submodular valuation function. The goal of the vendors is to maximize their revenue via pricing of the items, given that the buyer will buy the set of items that maximizes his net payoff.

 We show this game may not always have a pure Nash equilibrium, 
 in contrast to previous results for the special case where each
 vendor owns a single item. We do so by relating our game to an intermediate, discrete game
 in which the vendors only choose the available items, and their
 prices are set exogenously afterwards.

 We further make use of the intermediate game to provide tight bounds
 on the price of anarchy for the subset games that have pure Nash
 equilibria; we find that the optimal PoA reached in the previous special cases does not hold, but only a logarithmic one.

 Finally, we show that for a special case of submodular functions,
 efficient pure Nash equilibria always exist.
\end{abstract}

\section{Introduction}
Consider a scenario in the world of e-commerce, where a single
consumer is seeking to buy a set of products through an online website with
multiple vendors, such as Amazon or eBay. Given the items available
for sale and their prices, the buyer will purchase some subset
of them, according to his valuation of the items and their prices.

The vendors strive to
maximize their profits---for ease of exposition we assume
production costs are zero, hence profits can be equated with
revenues, or the sum of the prices of their
sold items.
A vendor can both competitively tailor the set
of items it offers and adjust
the prices of these items to react to their competitors.\footnote{Note 
that pricing an item sufficiently high can be
regarded as not offering it.} 
Indeed, automatic mechanisms for rapid online price optimization 
exist in a variety of markets and
industries~\cite{angwin2012coming}. This practice,
sometimes called \emph{competitive price intelligence}~\cite{skorupa14},
is a growing phenomenon within online retail. The specific
question that it addresses is how a company should price its products
in this competitive environment. 

Furthermore, as argued by Babaioff et al.~\cite{BabaioffNL14}, such a
setting introduces subtle algorithmic questions, since changing the
prices of the products may affect the resulting revenues in a
complex fashion, which may induce responses by one's
competitors. Therefore, studying the convergence properties of such
pricing dynamics is of interest.

In this paper, we take a game-theoretic approach to price competition
among multiple sellers, each with a set of multiple items. As in
Babaioff et al.~\cite{BabaioffNL14}, we study a setting with a single
buyer with a (combinatorial) valuation function, taken to be a
monotone and submodular set function over the set of items, which is fully known to the vendors. However, unlike that earlier
work, we examine the more general case where each of the $k$ vendors
controls a disjoint set of items $A_i$, rather than a single item. Given the prices of all of the
items, the buyer will buy the set with the highest net-payoff
(valuation minus the total price). Our model induces a game
in which each of the vendors' strategy is a pricing of their
items.

\paragraph{Contributions}
We begin by discussing a related two-phase game, that serves as a
way-station in our study of the main game. In this intermediate
game, vendors can only modify the sets of items being offered,
whereas their prices for these items are subsequently set by a specific pricing
mechanism. We show that this game, resulting only from this
modification of game dynamics (without changing its
parameters), has two critical properties. First, any
strategy profile in the original game (a price vector) has a
corresponding strategy profile that results in at least as much
revenue for each of the vendors (Proposition~\ref{prop:discrete}).

More importantly, we show that for any pure Nash equilibrium in the
original game, there is a corresponding pure Nash equilibrium in the
intermediate game, 
in which each vendor sells the same items at the same prices as
in the original equilibrium
(Theorem~\ref{thm:NE}). 
Hence, we reduce the pricing decision to a decision over what items to
sell, allowing a significant simplification of the problem.

We next study basic game theoretic properties of our game. We first show
that there are games which admit no pure Nash equilibrium. To do
so, we show that our two-phase game admits no pure Nash
equilibria, which then implies the nonexistence of a pure Nash
equilibrium in the original game, using Theorem~\ref{thm:NE}
(see Proposition~\ref{prop:no_PNE}).  This result suggests the
following question: suppose we restrict attention to instances of the game
that have some pure Nash equilibria---can we then say anything about
their value?  To accomplish this, we analyze the price of anarchy
(PoA) of this subset of games, where the objective function in
question is the social welfare value, taken to be simply the buyer's
valuation of the set of items that he purchased. We provide a tight
bound of $\Theta(\log m)$, where $m$ is the maximal number of items
controlled by any of the vendors.

Finally, as an additional way of dealing with the consequences of
Proposition~\ref {prop:no_PNE}, we investigate a special class of
valuation functions, which we call \emph{category-substitutable},
that, informally, partition products into ``equivalence
classes'' or categories, such that only a single item
will be chosen within a specific category, while 
different categories do not influence one
another. 
We show not only that \emph{efficient} pure Nash equilibria always
exist given such buyer valuations, but also provide a
precise characterization of such equilibria.

\section{Previous Work}

Multi-item pricing has been a significant topic  of research for many
years~\cite{HN12,HR12}, including analyses of the price of anarchy
(see \cite{CKS08} and follow-up papers).
The work of Babaioff et
al.~\cite{BabaioffNL14} is the most directly related to the model
developed here; indeed, the game that they study is a special case of
our game, in which each vendor sells only a single item. They show that for a
buyer with a general valuation function, a pure Nash equilibria may not
exist, though they prove several properties of the equilibria for games where 
one does. Furthermore, for \emph{submodular} valuation functions (the focus of
our paper), they show not only that pure Nash equilibria exist, but that they
are unique\footnote{More precisely, they give a closed-form characterization of
the prices of the items that \textbf{are sold}.} and efficient---i.e.,
have a \emph{price of anarchy (PoA)} of one. In contrast
to their setting, we show that in
our more general case there exist games with no pure Nash equilibria.
In cases where they do, we provide a characterization similar to theirs,
though in our more general case, the PoA is significantly higher.
 
Non-competitive (i.e., single-vendor) optimal-pricing problems
have been studied in the theoretical computer
science community. \cite{Guruswami:2005:PEP:1070432.1070598} study
a number of settings with \emph{multiple} buyers possessing various
valuation functions. They show that even with unit-demand buyers and an
unlimited supply of each item, selecting the optimal price vector is APX-hard;
they then provide a logarithmic approximation algorithm for the same case. It
should be noted that Babaioff et al.~also provide a $\log n$ approximation
algorithm for the case of a single vendor, and for that of a single buyer with
a submodular valuation function.


In a recent paper, Oren et al.~\cite{oren2014game} analyze a model in which
fixed prices are given exogenously, and there are multiple
unit-demand buyers. As above, their model assumes an unlimited supply
of each item.
The strategies of the vendors are which
\textbf{sets} of items to sell. Having the vendor make
decisions only about the set of items to sell has traditionally
been studied in the field of operation research. In particular, 
\emph{assortment optimization}~\cite{schon} deals with optimizing a seller's
``assortment'' (e.g., his catalog, or shelf), under various circumstances.
Although our game does not fall directly in this category, we do define a
discrete game in which the vendors' decisions are similar to
those in assortment
optimization (although the pricing procedure differs).


\section{Preliminaries --- The Vendor Competition (VC) Game}

We consider the following setting: there is a set of $k$ vendors, with a corresponding vector of pairwise-disjoint sets of items $\mathbf{A}=(A_1,\ldots,A_k)$, such that $|A_i|=n_i$, and $n=\sum_{i=1}^{k}n_i$. We let $A^*=\bigcup_{i=1}^kA_i$.

A \emph{strategy profile} of the vendors is a price vector $\mathbf{p} \in
\mathbb{R}_{+}^n$, where $p(a)$ denotes the price of item $a$ according to
$\mathbf{p}$. For a set $S \subseteq A^*$, we let $p(S)=\sum_{a \in S}p(a)$.
For a vendor $i \in [k]$, we let $p_i \in \mathbb{R}_+^{n_i}$ denote vendor
$i$'s price vector for the items in $A_i$, and as before, for an item $a \in
A_i$, $p_i(a)$ denotes vendor $i$'s price for item $a$ according to $p_i$. For
convenience, we will let $\mathbf{p_{-i}}$ denote the price vector
corresponding to the items not in $A_i$ (of all other vendors).

\paragraph{The buyer's valuation function}

We assume that there is a single buyer with a valuation function $v:2^{A^*}
\rightarrow \R_{+}$; that is, the function $v(\cdot)$ assigns a non-negative
value to every \emph{bundle} (or subset) of items.  We let $m_a(S)=v(S \cup
\{a\}) - v(S)$, where $S \subseteq A^*$, denote
the \emph{marginal contribution} of
item $a$ to the set $S$. Following \cite{BabaioffNL14}, we assume that
$v(\cdot)$ is \emph{non-decreasing}: for $S \subseteq T \subseteq A^*$, $v(S) \leq
v(T)$ (implying that $v(\cdot)$ is maximized at $A^*$). Furthermore, we assume
that the valuation function is \emph{submodular}: for $S \subseteq T \subseteq A^*$
and $a \in A^* \setminus T$, we have that $m_a(S) \geq m_a(T)$. Both of these
assumptions are central in the model proposed by Babaioff et
al.~\cite{BabaioffNL14} (although additional discussion and results are provided
for non-submodular functions as well). Note that $v(\cdot)$ is said to be
submodular if the following, equivalent property holds: 
\[ v(S)+v(T) \geq v(S \cup T) + v(S \cap T),\quad \text{for all }S 
  \subseteq T \subseteq A^*. \]
Finally, slightly abusing notation, let the valuation be defined
over vectors of item sets as follows: for $S^* \subseteq A^*$, define
$\mathbf{S}=(S_1,\ldots,S_k)$ where $S_i = S^* \cap A_i$, for $i=1,\ldots,k$.
Then $v(\mathbf{S})=v(S^*)$. We adapt the rest of our function definitions in
an analogous fashion.

The buyer is assumed to have a \emph{quasi-linear} utility function: 
given a
vector of prices $\mathbf{p} \in \mathbb{R}_{+}^n$, the buyer's utility for a
bundle $S \subseteq A^*$ is $u_b(S,\mathbf{p}) = v_b(S) - \sum_{a \in
S}p(a)$. 
The \emph{demand correspondence} of the buyer is 
the family of sets that maximizes his
utility: 
$$D(v;\mathbf{p})=\{S \subseteq A^* : u_b(S) \geq u_b(S'), \forall  S' \subseteq A^* \}.$$ 
The buyer's
\emph{decision function}
$X(v;\mathbf{p})\subseteq 2^{A^{*}}$
 must satisfy
$X(v;\mathbf{p}) \in D(v;\mathbf{p})$. That is, given the price vector
$\mathbf{p}$, the buyer buys the bundle $X(v;\mathbf{p})$.  The buyer's
decision is said to be \emph{maximal} (or simply, the buyer is maximal), if
there does not exist a set $\tilde X \in D(v;\mathbf{p})$ such that
$X(v;\mathbf{p}) \subsetneq \tilde X$. Babaioff et al.~\cite{BabaioffNL14}
showed this property is critical to ensure the existence of a pure Nash
equilibrium in their setting. In our work, we will explicitly state where this
property is required.

\paragraph{Vendor payoffs}

Given the buyer's decision function $X$, and a (fixed) price vector
$\mathbf{p}=(p_{i},\mathbf{p_{-i}})$, vendor $i$'s utility is
$u_i^X(\mathbf{p})=\sum_{a \in X(v;\mathbf{p})\cap A_{i}}p(a)$. If the vendors
select mixed strategies, then a vendor's utility is defined to be his
\emph{expected} utility. 
Vendor $j$'s \emph{best response} to the other agents' mixed strategies is a
distribution over prices for $A_j$ that maximizes his expected utility.

This setup defines a game, parameterized by the vector $\mathbf{A}$ and the
valuation function $v$, in which each of the vendors prices his items to
maximize his utility. We will refer to such a game as a
\emph{vendor competition game}, or simply a 
\emph{VC game}.

When discussing our special case, in Section~\ref{semi-equivalent}, we will also make use of the following theorem, which was proved by Babaioff et al.:
\begin{theorem}[\cite{BabaioffNL14}]
\label{thm:babaioff14}
  Consider the case where each vendor owns a single item, that is $n_i = 1$ for every $i=1,\ldots,k$; that is, $A_i=\{a_i\}$, for $i=1,\ldots,k$. Then if the buyer's valuation function $v(\cdot)$ is non-decreasing and submodular, then there exists a pure Nash equilibrium, $\mathbf{p} \in \R_+^k$, of the following form: for every vendor $i$, such that $m_{a_i}(A^* \setminus a_i) > 0$, $p(a_i)=m_{a_i}(A^* \setminus a_i)$, and $a_i \in X(v;\mathbf{p})$. Also, the payoff of each vendor $i$ is precisely $m_{a_i}(A^* \setminus a_i)$.
\end{theorem}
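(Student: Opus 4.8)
The plan is to verify directly that the proposed price vector $\mathbf{p}$, with $p(a_i)=m_{a_i}(A^*\setminus a_i)$, is a pure Nash equilibrium, relying throughout on a single ``telescoping'' consequence of submodularity. First I would establish the structural inequality underlying everything: for \emph{any} sub-universe $U\subseteq A^*$ and any $S\subseteq U$,
\[ v(U)-v(S)\;\ge\;\sum_{a\in U\setminus S} m_a(A^*\setminus a). \]
This follows by ordering the elements of $U\setminus S$ as $b_1,\dots,b_\ell$, writing $v(U)-v(S)=\sum_j m_{b_j}(S\cup\{b_1,\dots,b_{j-1}\})$, and observing that each base set $S\cup\{b_1,\dots,b_{j-1}\}$ is contained in $A^*\setminus b_j$, so submodularity gives $m_{b_j}(S\cup\{b_1,\dots,b_{j-1}\})\ge m_{b_j}(A^*\setminus b_j)$.

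Next I would invoke this with $U=A^*$ to show $A^*\in D(v;\mathbf{p})$: for any $S\subseteq A^*$ the inequality rearranges exactly to $u_b(A^*,\mathbf{p})\ge u_b(S,\mathbf{p})$, since the prices are precisely the marginals $m_{a}(A^*\setminus a)$. Because the inequality is strict whenever a dropped item has positive marginal contribution, a maximal buyer must purchase every item with $m_{a_i}(A^*\setminus a_i)>0$; hence $a_i\in X(v;\mathbf{p})$ and vendor $i$ collects exactly $p(a_i)=m_{a_i}(A^*\setminus a_i)$, establishing the claimed payoffs.

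The core of the argument is the deviation check. Fixing $\mathbf{p_{-i}}$ at the stated marginals and letting vendor $i$ post an arbitrary price $q$, I would identify the buyer's best bundle among those containing $a_i$ and among those excluding it. Applying the telescoping inequality with $U=A^*$ (the price of $a_i$ held at $q$ shifts the utility of every such bundle by the same $-q$) shows the best bundle containing $a_i$ is $A^*$; applying it with $U=A^*\setminus a_i$ shows the best bundle excluding $a_i$ is $A^*\setminus a_i$. Comparing the two utilities, the buyer (weakly) prefers to purchase $a_i$ exactly when $q\le v(A^*)-v(A^*\setminus a_i)=m_{a_i}(A^*\setminus a_i)$. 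Vendor $i$'s revenue is therefore $q$ up to this threshold and $0$ above it, so the best response is $q=m_{a_i}(A^*\setminus a_i)$, matching $\mathbf{p}$; vendors with zero marginal contribution can earn at most $0$ and are consistent with the stated form.

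I expect the delicate step to be the boundary case $q=m_{a_i}(A^*\setminus a_i)$, where the buyer is indifferent between the best bundles with and without $a_i$: it is precisely the \emph{maximality} of the buyer's decision that forces the purchase of $a_i$ there, securing the claimed revenue and preventing the vendor from being undercut by an indifferent buyer who simply drops the item. The other point requiring care is checking that the telescoping inequality applies cleanly over the restricted universe $A^*\setminus a_i$, where the competitors' prices are still marginals to the \emph{full} set $A^*\setminus a_j$ rather than to the restricted universe; the one-at-a-time submodularity argument nonetheless goes through, since each intermediate base set remains a subset of the corresponding $A^*\setminus a_j$.
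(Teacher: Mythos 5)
The paper offers no proof of this statement at all---it is imported verbatim from Babaioff et al.~\cite{BabaioffNL14} and used as a black box in Section~\ref{semi-equivalent}---so there is no in-paper argument to compare against, and your proposal must stand on its own. On its own terms it is correct and complete: the telescoping lemma ($v(U)-v(S)\geq\sum_{a\in U\setminus S}m_a(A^*\setminus a)$ for $S\subseteq U\subseteq A^*$) is sound, its two instantiations $U=A^*$ and $U=A^*\setminus a_i$ correctly identify the buyer's best bundles with and without $a_i$ under a unilateral deviation to price $q$, the revenue function is then $q$ for $q\leq m_{a_i}(A^*\setminus a_i)$ and $0$ above, and you correctly flag both genuinely delicate points: maximality resolving the indifference at the threshold, and the fact that competitors' prices remain marginals relative to the \emph{full} universe when telescoping over $A^*\setminus a_i$. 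One justification does need repair, though the fix is immediate. In your second paragraph you claim the inequality $u_b(A^*,\mathbf{p})\geq u_b(S,\mathbf{p})$ is strict whenever a dropped item has positive marginal contribution; that is false in general---for an additive valuation the prices equal the item values and the buyer is exactly indifferent among \emph{all} bundles---so strictness cannot be what forces the purchase even at the equilibrium profile. But no strictness is needed: since $A^*\in D(v;\mathbf{p})$ and every other bundle is a proper subset of $A^*$, maximality alone forces $X(v;\mathbf{p})=A^*$, hence $a_i\in X(v;\mathbf{p})$ for every $i$; this is the same maximality reasoning you already invoke at the boundary case $q=m_{a_i}(A^*\setminus a_i)$. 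As a point of style, your telescoping-plus-maximality machinery is precisely the toolkit the paper itself deploys in proving Theorem~\ref{thm:NE} (the inductive claim there is your lemma in slightly disguised form), so the reconstruction is faithful to the techniques surrounding the cited result.
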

\commentout{
\begin{theorem}[\cite{BabaioffNL14}]
\label{thm:babaioff14-approx}
Consider the case where each vendor owns a single item, that is $n_i = 1$ for every $i=1,\ldots,k$. Then if the buyer's valuation function $v(\cdot)$ is non-decreasing and submodular, and if there is only one vendor, then there is a $\Theta(\log n)$-approximate pricing algorithm. They also argue that this is tight, due to an inapproximation result given by Demaine et al. 
 \end{theorem}}
\paragraph{The objective function} Given a VC game $G=(v,\mathbf{S})$ and pricing vector $\mathbf{p} \in \R^n_+$, we use
the standard definition of social welfare, namely, the total
payoff of all the parties in the game, including the (non-strategic) buyer. Notice that by this definition, social welfare is simply the valuation of the set bought by the buyer, $v(X(v;\mathbf{p}))$, since all payments are simply transferred from the buyer to the vendors. We let $f(\mathbf{p})$ denote the social welfare resulting from a price vector $\mathbf{p}$.


\section{A Related Discrete Game}
\label{sec:discretization}

The game in its current formulation may seem somewhat hard to reason about, due
to large (continuous) strategy spaces.\footnote{In particular, the game is
clearly not in normal form. As such, we cannot directly apply Nash's theorem
about the existence of a mixed equilibrium. We defer the treatment of such
equilibria to future study.} To simplify our analysis, we use the
following discrete game, which can be thought of as imposing 
a specific pricing mechanism given the vendors' selection of
items (the design of the pricing scheme is influenced by
the results of~\cite{BabaioffNL14}).

\begin{definition}[The price-moderated VC game]
\label{def:pm-vc}
Given a buyer valuation function over the vendors' items, consider the 
following two-round process:
\begin{enumerate}
\item Each vendor $i \in [k]$ commits to offering
a subset of $S_i \subseteq A_i$ of items; this is its (discrete) strategy;
\item Given the strategy vector $\mathbf{S}=(S_1,\ldots,S_k)$, item prices are set to be their marginal values. That is, if we set $S^* = \bigcup_{i=1}^kS_i$, then for each $a \in S^*$, the mechanism will set $\tilde p (a)=m_a(S^* \setminus \{ a \})$. For each item $a' \notin S^*$ the mechanism sets $\tilde p(a') = v(A^*)+1$. Let $\mathbf{\tilde p}$ be the resulting price vector.
\end{enumerate}
The consumer then buys the set $X(v;\mathbf{\tilde p})$, as before. 
We call the resulting game a \emph{price-moderated vendor competition} game, or more succinctly, a \emph{PMVC} game.
\end{definition}
By analogy to our definitions for the original game, let $X'(v;\mathbf{S})$ denote the set of items sold, given the strategy profile $\mathbf{S}$. That is, given the price vector $\mathbf{\tilde p}$ imposed by the pricing mechanism in the second round, $X'(v;\mathbf{S}) = X(v;\mathbf{\tilde p})$. We similarly define a vendor's utility to be $u_i'(S_i,\mathbf{S_{-i}})$, for $i \in [k]$.

Note that the specified pricing ($v(A^{*}+1)$) of items not offered (i.e., not in $S^*$) ensures that the consumer will never buy them (i.e., $X(v;\mathbf{\tilde p}) \subseteq S^*$).
Further observe that the set of price vectors $\mathbf{\tilde p}$ that correspond to the discrete strategy profiles $\mathbf{S}$ in the PMVC game is a strict subset of the strategy space in the original VC game.
We justify our use of this game in our analysis by establishing the relationship between the original VC game and the proposed PMVC game, using a number of straightforward results.

\commentout{
Our following observation trivially follows from the fact that the induced strategy space of the PMVC game, resulting from translating the discrete strategy profiles to their corresponding price vectors as described above, can be viewed as subset of the strategy space of the original game.
\begin{observation}
Every resulting price vector $\mathbf{\tilde p}$ in the PMVC game, corresponding to a strategy profile $\mathbf{S}=(S_1,\ldots,S_k)$, can be translated to a strategy profile in the original VC game by setting the price of each sold item $a \in S^*$ to $m_{a}(S^*\setminus \{a\})$, and the price of every unsold item $a ' \notin S^*$ to be $v(A^*)+1$.
\end{observation}
}

\paragraph{Assumption} 

For ease of exposition, we assume that the buyer is maximal.
As we shall see, this implies that $X'(v;\mathbf{S})=X(v;\mathbf{p})$.
However, we can adapt the pricing mechanism by judiciously setting the prices
to be slightly below the marginal contributions to ensure maximality
(we leave the details of such a
modification to an expanded version of the paper).  We now
describe an important relationship between the VC and PMVC game that
simplifies our subsequent analysis by relating our original model to
to a simpler discrete game:
\begin{proposition}
\label{prop:discrete}
For every strategy profile $\mathbf{p}$ in the VC game and valuation
$v$, there is a strategy
profile $\mathbf{S}$ in the PMVC game such that $X'(v;\mathbf{S}) =
X(v;\mathbf{p})$, and $u'_i(S_i,\mathbf{S_{-i}}) \geq
u_i(p_i,\mathbf{p_{-i}})$ for each vendor $i$.
\end{proposition}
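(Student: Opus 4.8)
The plan is to take the candidate discrete profile to be the one in which every vendor offers exactly the items the buyer already bought. Writing $X := X(v;\mathbf{p})$ for the bundle purchased in the VC game, I set $S_i := X \cap A_i$ for each vendor, so that $S^* = X$ and the moderated prices become $\tilde p(a) = m_a(X\setminus\{a\})$ for $a \in X$ and $\tilde p(a') = v(A^*)+1$ for $a' \notin X$. Everything then reduces to two claims: (i) under $\mathbf{\tilde p}$ the buyer again purchases exactly $X$, i.e. $X'(v;\mathbf{S}) = X$; and (ii) for each sold item the moderated price dominates the original price, $\tilde p(a) \ge p(a)$ for all $a \in X$. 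Granting (i) and (ii), vendor $i$'s PMVC revenue is $\sum_{a \in S_i} \tilde p(a) \ge \sum_{a \in X \cap A_i} p(a) = u_i(p_i,\mathbf{p_{-i}})$, which is the desired inequality, and (i) is literally the first conclusion $X'(v;\mathbf{S})=X(v;\mathbf{p})$.

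Claim (ii) is the easy half: since $X$ maximizes the buyer's utility under $\mathbf{p}$, dropping any single $a \in X$ cannot help, so $v(X)-p(X) \ge v(X\setminus\{a\}) - p(X\setminus\{a\})$, which rearranges to $p(a) \le v(X) - v(X\setminus\{a\}) = m_a(X\setminus\{a\}) = \tilde p(a)$. Thus the per-item, and hence per-vendor, revenue can only go up under moderated pricing, provided the buyer still buys $X$.

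The main work, and the step I expect to be the crux, is Claim (i). First, because every $a' \notin X$ is priced at $v(A^*)+1$ while its marginal value is at most $v(A^*)$ (by monotonicity and $v \ge 0$), including any such item strictly lowers the buyer's utility; hence every utility-maximizing bundle under $\mathbf{\tilde p}$ lies inside $X$, and it suffices to compare $X$ against arbitrary $T \subseteq X$. For such $T$, enumerate $X\setminus T = \{b_1,\dots,b_r\}$ and telescope, $v(X)-v(T) = \sum_{j=1}^{r} m_{b_j}\!\left(T\cup\{b_1,\dots,b_{j-1}\}\right)$. Since $T\cup\{b_1,\dots,b_{j-1}\} \subseteq X\setminus\{b_j\}$ and $b_j$ belongs to neither set, submodularity gives $m_{b_j}\!\left(T\cup\{b_1,\dots,b_{j-1}\}\right) \ge m_{b_j}(X\setminus\{b_j\}) = \tilde p(b_j)$. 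Summing over $j$ yields $v(X)-v(T) \ge \tilde p(X) - \tilde p(T)$, i.e. the buyer weakly prefers $X$ to every $T \subseteq X$, so $X \in D(v;\mathbf{\tilde p})$.

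Finally I invoke the maximality assumption to fix the choice uniquely: every bundle in the demand correspondence is contained in $X$, and $X$ itself is in it, so $X$ is the unique maximal demanded set and a maximal buyer selects $X'(v;\mathbf{S}) = X$, completing (i). The only genuinely substantive point is the telescoping-plus-submodularity estimate; the remainder is bookkeeping with the prohibitive off-$X$ prices and the maximality tie-break.
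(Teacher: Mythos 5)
Your proposal is correct and takes essentially the same route as the paper: the same discrete profile $S_i = X(v;\mathbf{p}) \cap A_i$, the same observation that the prohibitive price $v(A^*)+1$ confines demand to subsets of $X$, the same buyer-optimality bound $p(a) \le m_a(X\setminus\{a\}) = \tilde p(a)$ to compare revenues, and the same appeal to maximality to pin down the buyer's choice. The only cosmetic difference is in verifying that the buyer still buys $X$: the paper argues by contradiction, adding a single missing item $a$ to the purchased set and using submodularity ($\tilde p(a) = m_a(X\setminus\{a\}) \le m_a(\tilde T)$) to violate maximality, whereas you telescope over $X \setminus T$ to show directly that $v(X) - \tilde p(X) \ge v(T) - \tilde p(T)$ for every $T \subseteq X$ --- the same submodularity estimate packaged globally rather than one item at a time (and, incidentally, the same telescoping device the paper itself uses inside the proof of its Theorem~\ref{thm:NE}).
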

\begin{proof}
Let $\mathbf{p}$ be a strategy profile in the VC game, and let $T=X(v;\mathbf{p})$. Consider the strategy profile $\mathbf{S}$ where $S_i = X(v;\mathbf{p}) \cap A_i$, for $i=1,\ldots,k$, and let $\mathbf{\tilde p}$ be the resulting price vector imposed by the pricing mechanism. Furthermore, we let $\tilde T=X'(v;\mathbf{S})$. We begin by showing that $T=\tilde T$. First, notice that, as for all $a \notin T$, $\tilde p(a)=v(A^*)+1$, and hence item $a$ is not sold, and $\tilde T \subseteq T$. Next, suppose for the sake of contradiction that $\tilde T \subsetneq T$, and let $a \in T \setminus \tilde T$. By the submodularity of the function $v(\cdot)$, we have that $m_a(\tilde T) \geq m_a(T) \geq 0$. This implies that 
\begin{align*}
u_b(\tilde T \cup a, \mathbf{\tilde p})
&= v(\tilde T \cup a) - \sum_{a' \in \tilde T}\tilde p(a') - m_a(T\setminus a) \\
&\geq v(\tilde T \cup a) - \sum_{a' \in \tilde T}\tilde p(a') - m_a(\tilde T)\\
 &= u_b(\tilde T,\mathbf{\tilde p}).
\end{align*}
By maximality, the buyer would rather buy item $a$ as well, resulting in a contradiction.

We now claim that $u'_i(S_i,\mathbf{S_{-i}}) \geq u_i(p_i,\mathbf{p_{-i}})$. This follows from the fact that marginal contributions are the maximal prices at which the buyer still buys $X'(v;\mathbf{\tilde p})$. That is, any increase in the price would result in the buyer not buying the product: 
\begin{align*}
 u_b(\tilde T, \mathbf{\tilde p}) 
&= v(\tilde T) - \sum_{a' \in \tilde T \setminus a}\tilde p(a') - m_a( T \setminus a)\\
&= v(\tilde T \setminus a ) - \sum_{a' \in \tilde T \setminus a}\tilde p(a') = u_b(\tilde T \setminus a, \mathbf{\tilde p})
\end{align*}
\qed
\commentout{Every state in the general game induces a set of sold items $B\subseteq A$. 
We claim pricing all elements $b\in B$, $m_{b}(B)$, which is what happens in the discrete case, will only increase each players profit, so that the discrete game, where every player sells $B_{i}=B\cap A_{i}$ is a discrete game state ensuring each player at least as much utility as it got in its general game state. 
Obviously, every unsold item $a\in A\setminus B$ has $p_{a}>m_{a}(B\cup \{a\})$, otherwise it would have been sold (as it would have meant $v(B\cup \{a\})-p(B)-p_{a}>v(B)-p(B)$), contradicting the set $B$ being that of the sold items. However, if the price of items $b\in B$ were such that $p_{b}(B)<m_{b}(B)$, these prices could be increased to $m_{b}(B)$ and the items would still be sold (as $v(B)-p(B)-(m_{b}(B)-p_{b}(B))\geq v(B\setminus \{b\})-p(B\setminus \{b\})$), and the profit for the seller would increase.}
\end{proof}
Similarly to the previous proposition, which offered a mapping of strategy profiles in a way that does not cause the vendor's utilities to deteriorate, we now show that the same mapping also preserves Nash equilibria in cases where such equilibria exist. We note that the following result uses similar arguments to those given by Babaioff et al.~\cite{BabaioffNL14}, for proving a related characterization of pure Nash equilibria.
\begin{theorem}
\label{thm:NE}
For every pure Nash equilibrium $\mathbf{p}$ of a VC game there is a pure Nash equilibrium $\mathbf{S}=(S_1,\ldots,S_k)$ in the corresponding PMVC game, such that: (1) $X'(v;\mathbf{S})=X(v;\mathbf{p})$; and (2) for all $a \in X(v;\mathbf{p})$, $\tilde p(a)=p(a)$, where $\mathbf{\tilde p}$ is the induced price vector for $\mathbf{S}$. 
\end{theorem}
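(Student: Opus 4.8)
The plan is to reuse the construction from Proposition~\ref{prop:discrete}: given the VC equilibrium $\mathbf{p}$ with sold set $T = X(v;\mathbf{p})$, I would define the PMVC profile $\mathbf{S}$ by $S_i = T \cap A_i$, so that the offered set is $S^* = T$; Proposition~\ref{prop:discrete} already supplies property~(1), namely $X'(v;\mathbf{S}) = T$. Before anything else I would record one elementary fact about the PMVC pricing rule: when every item of an offered set $Q$ is priced at its marginal contribution $m_a(Q\setminus a)$, a maximal buyer purchases all of $Q$. This is exactly the indifference-plus-submodularity computation already appearing inside the proof of Proposition~\ref{prop:discrete}: removing a single item leaves the buyer's utility unchanged, while a telescoping submodularity estimate shows that removing any larger subset can only weakly decrease it, so maximality forces the buyer to take all of $Q$.

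To establish the price-matching property~(2), I would prove $p(a) = m_a(T\setminus a)$ for every $a\in T$ by squeezing. The upper bound $p(a)\le m_a(T\setminus a)$ is immediate from the buyer preferring $T$ to $T\setminus a$. For the matching lower bound I would invoke the equilibrium condition through the following deviation for the owner $i$ of $a$: reprice each $b\in S_i$ to $m_b(T\setminus b)$, price $A_i\setminus S_i$ prohibitively, and leave $\mathbf{p_{-i}}$ fixed. The crux is that under these prices the buyer still buys all of $S_i$, which I would argue by showing that any omitted item of $S_i$ could be re-added without loss, since its price equals its marginal contribution in $T$, hence by submodularity is at most its marginal contribution in whatever smaller set the buyer actually chose; maximality then retains it. This deviation secures vendor $i$ the revenue $\sum_{b\in S_i} m_b(T\setminus b)$, so by the equilibrium property it is at most $u_i(\mathbf{p})=\sum_{b\in S_i}p(b)$; combined with the term-wise upper bound this forces equality, giving~(2) and in particular $u_i'(\mathbf{S})=u_i(\mathbf{p})$.

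Finally, to show that $\mathbf{S}$ is a pure Nash equilibrium of the PMVC game, I would rule out any deviation $S_i'$ by embedding it into the VC game. By the recorded fact, the deviation yields vendor $i$ the revenue $\sum_{a\in S_i'} m_a(\hat S^*\setminus a)$, where $\hat S^* = S_i' \cup \bigcup_{j\ne i} S_j$. I would then let vendor $i$ play, in the VC game against the \emph{fixed} prices $\mathbf{p_{-i}}$, the prices $m_a(\hat S^*\setminus a)$ on the items of $S_i'$ and prohibitive prices elsewhere. The same ``an omitted item can be re-added'' argument shows that the buyer's purchased set contains all of $S_i'$, so vendor $i$'s VC revenue equals exactly the PMVC-deviation revenue. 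Since $\mathbf{p}$ is a VC equilibrium this revenue cannot exceed $u_i(\mathbf{p}) = u_i'(\mathbf{S})$, so no PMVC deviation is profitable.

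I expect the main obstacle to be the repeated claim that the buyer keeps vendor $i$'s targeted items after the repricing, precisely because the opponents' prices $\mathbf{p_{-i}}$ are held at their VC-equilibrium values and generally differ from the marginal-contribution prices the PMVC mechanism would impose on $\hat S^*$. The resolution is that pricing vendor $i$'s own items at their marginal contributions relative to $\hat S^*$ makes them weakly underpriced relative to their marginal contribution in \emph{any} subset the buyer might actually select; by submodularity and maximality the buyer then never strictly prefers to drop them, irrespective of how the fixed opponent prices reshape the rest of the bundle. Getting this monotonicity argument to go through uniformly over all deviations $S_i'$ (including those that offer previously unsold items of $A_i$) is the delicate point of the proof.
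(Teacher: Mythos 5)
Your overall route is the same as the paper's: you map the equilibrium $\mathbf{p}$ to $S_i = X(v;\mathbf{p})\cap A_i$, squeeze the prices of sold items between the two marginal-contribution bounds (the paper does this item-by-item via a single price raise, you do an aggregate squeeze over all of $S_i$ --- a fine repackaging), and you refute a PMVC deviation $S_i'$ by transplanting its induced prices $\tilde p'_i$ into the VC game against the fixed $\mathbf{p_{-i}}$, exactly as the paper does. However, the step you yourself flag as delicate is where your argument genuinely breaks, and it is precisely where the paper inserts an assumption you never make. Your proposed resolution --- that pricing vendor $i$'s items at $m_a(\hat S^* \setminus a)$ makes them ``weakly underpriced relative to their marginal contribution in \emph{any} subset the buyer might actually select'' --- has the submodularity inequality backwards. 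Submodularity gives $m_a(C \setminus a) \geq m_a(\hat S^* \setminus a)$ only when $C \subseteq \hat S^*$; if the buyer's chosen bundle $C$ contains items \emph{outside} $\hat S^*$, the inequality reverses, the item can be overpriced relative to its marginal in $C$, and the buyer may drop vendor $i$'s items.

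Such bundles are a real possibility, not a technicality: at the VC equilibrium, opponents' unsold items need only satisfy $p(a') \geq m_{a'}(B)$, so they may carry finite prices; once vendor $i$ withdraws $S_i$ (or reprices), those items' marginal contributions can rise above their fixed prices, so the buyer may pick them up, giving $B'' = X(v;\tilde p'_i,\mathbf{p_{-i}}) \not\subseteq \hat S^*$. Then vendor $i$ need not collect the PMVC deviation revenue $\sum_{a \in S_i'} m_a(\hat S^* \setminus a)$ in the VC game, and no contradiction with $\mathbf{p}$ being an equilibrium follows. The paper closes exactly this hole by stipulating, before the deviation analysis, that w.l.o.g.\ all unsold items are priced at $v(A^*)+1$; this is what yields the containment $B'' \subseteq B' = \hat S^*$ on which the re-add/submodularity argument depends. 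The same containment issue also infects your part-(2) deviation (the buyer's post-repricing bundle must be shown to stay inside $T$ before ``omitted items of $S_i$ can be re-added'' makes sense). So your proof needs either the paper's normalization of unsold-item prices (together with a justification that it is indeed without loss of generality for equilibria), or some other argument establishing that the buyer's post-deviation bundle stays within the offered set; as written, the final step is unproven.
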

\begin{proof}
For convenience, let $B=X(v;\mathbf{p})$. As before, we let the strategy profile in the corresponding PMVC game be $\mathbf{S}=(S_1,\ldots,S_k)$, where $S_i = B \cap A_i$, for $i=1,\ldots,k$.

We begin by proving part (2) of the theorem. Suppose that there is an item $a\in B$ such that $p(a) \neq m_{a}(B \setminus a)$. If $p(a) > m_{a}(B) = v(B)-v(B\setminus a)$, then $v(B\setminus a)- p(B\setminus a)>v(B)-p(B)$, implying that the buyer would not buy item $a$, contradicting our assumption that $a\in B$.

Assume now that $p(a) < m_{a}(B)$. Letting $\mathbf{p'}$ denote the vector resulting by replacing $p(a)$ in $\mathbf{p}$ with $m_a(B \setminus a)$, we clearly have that $u_b(B, \mathbf{p'}) = u_b(B \setminus a, \mathbf{p'})$. We now prove the following claim:
\begin{claim}
 $u_b(B\setminus a,\mathbf{p'}) \geq u_b(T,\mathbf{p'})$, for all $T \subseteq B \setminus a$.
\end{claim}
\begin{proof}
Let $(B \setminus a ) \setminus T = \{c_1,\ldots,c_m\}$, and let $P_t=(B \setminus a) \setminus \{c_1,\ldots,c_t\}$ for all $t=1,\ldots,m$. Then $u_b(B \setminus a,\mathbf{p'}) \geq u_b(P_t,\mathbf{p'})$.
We prove the claim inductively. Suppose the claim is true for $t < m$: $u_b(B \setminus a,\mathbf{p'}) \geq u_b(P_t,\mathbf{p'})$. We now show that the same inequality holds for $t+1$ as well:
\begin{align*}
 u_b(P_{t+1},\mathbf{p'}) &= v(P_{t+1}) - p'(P_{t+1}) = v(P_{t}) - p'(P_t) - (m_{c_{t+1}}(P_{t+1}) - p(c_{t+1})) \\
&\leq u_b(P_{t},\mathbf{p'}) - (m_{c_{t+1}}(B) - p(c_{t+1})) \leq u_b(P_t, \mathbf{p'})
\end{align*}
where the first inequality follows from submodularity, and the second inequality follows from the fact that $p(c_{t+1}) \leq m_{c_{t+1}}(B)$ as we have previously shown.\qed
\end{proof}
Therefore, the vendor who owns $a$ can increase his payoff by setting the price of item $a$ to any value between $p(a)$ and $m_a(B)$, contradicting the equilibrium state.


What is left to prove is that $\mathbf{S}$ is a Nash equilibrium in the PMVC game. Note that we can assume w.l.o.g.\ the price of all products which are not sold is $v(A^*)+1$, as they remain unsold and continue to contribute nothing to the buyer or seller. Now, suppose $\mathbf{S}$ is not a Nash equilibrium, and that there is a player $i$, which can benefit from changing his set of sold items from $S_i$ to $S'_i$, which would result in a different vector of induced prices $\mathbf{\tilde p'}=(\tilde p'_i, \mathbf{\tilde p'_{-i}})$. We now argue that vendor $i$ can make an identical improvement in his revenue by changing his price vector from $p_i$ to $\tilde p'_i$, contradicting $p$ being a Nash equilibrium. For convenience, we let $B'=(B \setminus S_i) \cup S'_i$, and $B'' = X(v;\tilde p'_i,\mathbf{p_{-i}})$.

To show this, first notice no other vendor would sell any previously unsold items as a result; that is, $X(v;\tilde p'_i,\mathbf{p_{-i}}) \setminus A_i \subseteq X(v;p_i,\mathbf{p_{-i}}) \setminus A_i$ (since prices of items in $(A^*\setminus A_{i})\cap  X(v;p_i,\mathbf{p_{-i}})$ are still $v(A^*)+1$). So $B''=X(v;\tilde p'_i,\mathbf{p_{-i}}) \subseteq X'(v;S'_i,\mathbf{S'_{-i}})=B'$. Thanks to submodularity, we have that for every $a \in S'_i$, $m_{a}(B')<m_{a}(B'')$. Arguments similar to the ones given above (on $p(a)=m_{a}(B)$) imply that player $i$ would sell all the items in $S'_i$, and as the prices are unchanged from the PMVC game, will make the same profit as in the PMVC game. As this increases the player's profit in the PMVC game, it would increase its profit in the VC game as well, in contradiction to $p$ being a Nash equilibrium. \qed

\end{proof}

\paragraph{Discussion}
Note that we have not shown an exact equivalence between the two games: the set of Nash equilibria in the VC game is a subset of the equilibria in the PMVC game.
However, Proposition~\ref{prop:discrete} and Theorem~\ref{thm:NE} allow us to reason about our original game to a considerable extent. 

In contrast to the original model of Babioff et al.~in which $n_i=1$ for all $i=1,\ldots,k$, we can show that in our more general game, there may not always be a pure Nash equilibrium. In order to do so, we provide an example of a VC game in the next section with two vendors who each control two items. We show that this game does not admit any pure Nash equilibrium by relating to its corresponding PMVC game, using Theorem~\ref{thm:NE}.
Moreover, if we restrict ourselves to VC games that do admit pure Nash equilibria, we can provide quantitative bounds on their quality. Specifically, when restricting ourselves to VC games that have pure Nash equilibria, we provide a
lower bound on the price of stability of the PMVC game by analyzing an instance of the game. As the optimal objective value (the valuation of the set that is bought by the buyer) is always $v(N)$, Theorem~\ref{thm:NE} immediately implies that the same lower bound applies to the VC game. To complement our lower bound, we also provide an upper bound for the price of anarchy, also ensuring tightness of bounds.


\section{Equilibrium Analysis}

In Section~\ref{sec:discretization}, we outlined
several properties of the discrete PMVC game. We now describe
how the PMVC game can serve as a surrogate to help analyze
the stability of the VC game, and the quality of equilibria in the
VC game in cases when they exist.

\subsection{Existence of pure Nash equilibria}
We begin by showing that, as opposed to the special case where each vendor owns a single item, some instances of our game may not actually admit pure Nash equilibria. 
\begin{proposition}
  \label{prop:no_PNE}
There exists an instance of the VC game with two vendors, where $n_1=n_2=2$, that does not admit a pure Nash equilibrium.
\end{proposition}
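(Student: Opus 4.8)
The plan is to invoke Theorem~\ref{thm:NE}: since every pure Nash equilibrium of a VC game lifts to one of its associated PMVC game, it suffices to exhibit a single monotone submodular valuation on four items, with $A_1=\{a_1,a_2\}$ and $A_2=\{b_1,b_2\}$, for which the induced PMVC game has no pure Nash equilibrium. First I would record the simplification that makes the PMVC game tractable: under the marginal pricing of Definition~\ref{def:pm-vc}, submodularity gives $m_a(T)\ge m_a(S^*\setminus a)=\tilde p(a)$ for every $T\subseteq S^*\setminus a$, so adding any offered item is weakly profitable and a maximal buyer purchases the entire offered set $S^*=S_1\cup S_2$. Consequently each vendor's payoff collapses to the closed form
\[
u_i'(\mathbf{S})=\sum_{a\in S_i} m_a(S^*\setminus a)=\sum_{a\in S_i}\bigl(v(S^*)-v(S^*\setminus a)\bigr),
\]
and the PMVC game becomes a finite $4\times 4$ game, each vendor choosing one of $\emptyset,\{a_i\},\{a_i'\},\{a_i,a_i'\}$. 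The task thus reduces to designing $v$ so that this finite game has no pure equilibrium.

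Next I would engineer a cyclic best-response structure. A useful preliminary observation is that if both vendors were restricted to offering a \emph{single} item, the resulting game is a pure coordination game: with only two items sold, each is priced at its full marginal $v(\text{pair})-v(\text{other})$, so both vendors prefer the pair with the larger $v$, guaranteeing an equilibrium. Hence any equilibrium-free instance must force the vendors to oscillate between offering one and two items. I would therefore seek a \emph{symmetric} instance (invariant under $a_1\leftrightarrow b_1$, $a_2\leftrightarrow b_2$) whose three nontrivial strategies---``offer the strong item'' $\{a_1\}$, ``offer the weak item'' $\{a_2\}$, and ``offer both'' $\{a_1,a_2\}$---form a rock-paper-scissors cycle in best responses: the best reply to the opponent's strong item is one's weak item, the best reply to the weak item is both items, and the best reply to both items is one's strong item. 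Since offering $\emptyset$ is never a best reply, a short case analysis over the $16$ profiles shows that such a cyclic best-response correspondence has no fixed point, so no profile is an equilibrium.

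The main obstacle is realizing this cycle with a \emph{genuinely submodular} $v$: the inequalities encoding the three best responses pull the pairwise and triple values in the opposite direction from the diminishing-returns constraints. Writing $s$ for the common single-item value, $P,Q,M,N$ for the two-strong, two-weak, same-vendor, and cross pairs, and $T_1,T_2,F$ for the two-strong/one-weak triple, the one-strong/two-weak triple, and the grand bundle, the best-response conditions demand, among others, that the two cross pairs be the most valuable pairs ($N>P$, $N>s$) and that $T_2>2N-M$, whereas submodularity forces $m_{a_1}(\{a_2,b_2\})=T_2-Q\le m_{a_1}(\{a_2\})=M-s$, i.e.\ $Q\ge T_2-M+s$. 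The delicate point is that the naive choice of a \emph{small} cross pair $Q$---which most easily makes ``offer both'' a strong reply---violates submodularity; the two requirements are simultaneously satisfiable only in a narrow window, which in turn forces $M$ to be large relative to $N$. Once a consistent assignment is fixed (for instance, all singletons equal, the two cross pairs dominating the other pairs, and the triples placed just below the grand bundle), I would complete the proof by (i) verifying monotonicity and submodularity through the check that every item's marginal is non-increasing along the subset lattice, and (ii) computing the $4\times 4$ payoff matrix to confirm the rock-paper-scissors cycle, which by the argument above yields the claimed absence of any pure Nash equilibrium.
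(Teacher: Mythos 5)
Your architecture is exactly the paper's: fix a four-item instance with $n_1=n_2=2$, pass to the PMVC game, reduce it to a finite $4\times 4$ game, show that game has no pure Nash equilibrium, and conclude via Theorem~\ref{thm:NE}. Your closed-form payoff observation---that under the pricing of Definition~\ref{def:pm-vc} a maximal buyer purchases all of $S^*$, so $u_i'(\mathbf{S})=\sum_{a\in S_i}\bigl(v(S^*)-v(S^*\setminus a)\bigr)$---is correct (submodularity makes each offered item's marginal at least its price along any augmenting chain, and maximality then forces the full set), and it is precisely what the paper uses implicitly when it fills in its payoff table. Where you genuinely diverge is the engineering of the counterexample: you aim for a \emph{symmetric} instance whose best replies form a rock--paper--scissors $3$-cycle on $\{$strong, weak, both$\}$, whereas the paper's valuation is asymmetric and its best-response dynamic is a $4$-cycle, $(\{a\},\{c\})\to(\{a\},\{c,d\})\to(\{b\},\{c,d\})\to(\{b\},\{c\})$, in which vendor~1 only ever offers singletons. (This also shows your motivational remark---that an equilibrium-free instance must make vendors oscillate between one and two items---is too strong as stated, since here one vendor oscillates between two singletons; but nothing in your proof rests on it.) Your fixed-point argument for the $3$-cycle is valid given unique best replies and the fact that $\emptyset$ is never a best reply.

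The one substantive shortfall is that for a nonexistence-by-counterexample statement the explicit valuation \emph{is} the proof, and you stop at ``once a consistent assignment is fixed.'' Your feasibility analysis is, however, sound---the submodularity constraint $Q\geq T_2-M+s$ and the derived necessary condition $T_2>2N-M$ are correct---and the window is indeed nonempty, so the plan does complete. To certify this: take $s=10$ for all four singletons, $P=v(a_1b_1)=16$, $N=v(a_1b_2)=v(a_2b_1)=18$, $Q=v(a_2b_2)=13$, $M=v(a_1a_2)=v(b_1b_2)=17$, $T_1=v(a_1a_2b_1)=v(a_1b_1b_2)=20.5$, $T_2=v(a_1a_2b_2)=v(a_2b_1b_2)=20$, and $F=21$. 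This $v$ is strictly monotone and submodular (all marginal comparisons pass; the tightest are $T_2-Q=M-s=7$ and $Q-s=T_2-M=3$), and the vendor-$1$ payoffs are: against the strong item, $6$, $8$, $7$ for strong/weak/both, so weak wins; against the weak item, $8$, $3$, $9$, so both wins; against both, $3.5$, $3$, $1.5$, so strong wins; against $\emptyset$, $10$, $10$, $14$, all positive, so $\emptyset$ is never a reply. Hence your cycle is realized, no pure equilibrium exists in the PMVC game, and Theorem~\ref{thm:NE} finishes the argument exactly as in the paper.
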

\begin{proof}
  Let $A_1=\{a,b\}$ and $A_2=\{c,d\}$. We define the buyer's
  valuation function $v$ according to Table~\ref{tab:ce-valuation} (the
  value in each cell is the valuation of the union of the
  sets given at the head of the entry's row and column).
\commentout{
  $v(\emptyset ) = 0,
   v(\{b\})=2.503,
   v(\{d\})=2.703,
  v(\{c\})=2.803,
  v(\{a\})=3.203,
  v(\{c,d\})=4.1045,
  v(\{a,b\})=4.4045,
  v(\{b,d\})=5.204,
  v(\{a,d\})=v(\{b,c\})=5.304,
  v(\{a,c,d\})=v(\{a,b,d\})=6.5045,
  v(\{a,b,c\})=v(\{b,c,d\})=6.5045,$ and $v(\{a,b,c,d\})=7.6045$.
}

\begin{table}[ht]
\begin{minipage}[t]{.3\linewidth}
\begin{tabular}{@{}lllll@{}}
\toprule
            & $\emptyset$ & $\{c\}$ & $\{d\}$ & $\{c,d\}$ \\ \midrule
$\emptyset$ & 0           & 2.803   & 2.703   & 4.1045    \\
$\{a\}$     & 3.203       & 5.404   & 5.304   & 6.5045    \\
$\{b\}$     & 2.503       & 5.304   & 5.204   & 6.6045    \\
$\{a,b\}$   & 4.4045      & 6.6045  & 6.5045  & 7.6045    \\ \bottomrule
\end{tabular}
\caption{The buyer's valuation function}
\label{tab:ce-valuation}
\end{minipage}
\qquad\quad\quad
\begin{minipage}[t]{.3\linewidth}
\begin{tabular}{@{}lllll@{}}
\toprule
            & $\emptyset$ & $\{c\}$       & $\{d\}$       & $\{c,d\}$   \\ \midrule
$\emptyset$ & (0,0)       & (0,2.803)     & (0,2.7030)    & (0,2.703)   \\
$\{a\}$     & (3.203,0)   & (2.601,2.201) & (2.601,2.101) & (2.4,2.301) \\
$\{b\}$     & (2.503,0)   & (2.501,2.801) & (2.501,2.701) & (2.5,2.701) \\
$\{a,b\}$   & (3.103,0)   & (2.501,2.2)   & (2.501,2.1)   & (2.1,2.1)   \\ \bottomrule
\end{tabular}
\caption{The payoffs for each vendor.}
\label{ce-payoffs}
\end{minipage}
\end{table}
It is easy to verify that $v$ is (strictly) non-decreasing and
submodular. Now, consider the PMVC game with the same item sets and
valuation function $v$. For each strategy profile $(S_1,S_2)$,
the mechanism prices the items according to their marginal
contributions (Definition~\ref{def:pm-vc}). Therefore,
vendor payoffs are the sum of the prices of the items
offered. The vendors' payoffs for each strategy profile are given in
Table~\ref{ce-payoffs} (the first entry corresponds to the row
player, Vendor~1, and the second to the column player, Vendor~2). As is evident from
Table~\ref{ce-payoffs}, there is no pure Nash equilibrium in the PMVC
game. Theorem~\ref{thm:NE} then implies our proposition.
\end{proof}

\subsection{How bad can equilibria be?}

Given the negative nature of Proposition~\ref{prop:no_PNE},
we now 
restrict attention to the subclass of VC
games that \emph{do} admit pure Nash equilibria,
and ask whether reasonable
guarantees on social welfare in such equilibria can be derived

More formally, let $\mathcal{G}=\{G=(v,(A_1,\ldots,A_k)): \exists
\text{ a pure Nash equilibrium in G} \}$ be the set of VC
games which admit a pure Nash equilibrium. 
Define the \emph{price of anarchy (PoA)}
as follows:
\[ PoA_{\mathcal{G}} = \max_{G \in
  \mathcal{G}}\frac{\max_{\mathbf{p^*}}f(\mathbf{p^*})}{\min_{\mathbf{p}:\mathbf{p}\text{
    is a pure Nash equilibrium}}f(\mathbf{p})}\]
PoA is a commonly used worst-case measure of
the efficiency of the equilibria, and in our case reflects
the efficiency loss in
$\mathcal{G}$ resulting from the introduction of strategic
pricing, as opposed to using a ``centrally coordinated'' pricing policy.

\begin{theorem}
  Define the set of VC games $\mathcal{G}_{m}$, such
  that $G \in \mathcal{G}_m$ iff (1)
  $G$ has a pure Nash equilibrium, and
  (2) $max_{i=1}^k|A_i|=m$. Then the PoA of $\mathcal{G}_m$ is at most
  $H_m+1$, where $H_m$ is the $m$'th harmonic number.
\end{theorem}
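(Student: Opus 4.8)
The plan is to transport the problem to the discrete PMVC game, reduce the welfare loss to a per-vendor inequality, and extract the harmonic factor from a single-vendor ``uniform price'' thought experiment. First I would invoke Theorem~\ref{thm:NE}: every pure Nash equilibrium $\mathbf p$ of a game $G\in\mathcal G_m$ corresponds to an equilibrium $\mathbf S$ of the PMVC game selling the same set $B=X(v;\mathbf p)$ at the same prices, so $f(\mathbf p)=v(B)$; and the optimal welfare is $v(A^*)$ in both games (price everything at $0$ and use maximality). Hence it suffices to prove $v(A^*)\le (H_m+1)\,v(B)$ for every PMVC equilibrium with sold set $B$, where $a\in B$ is priced at $\tilde p(a)=m_a(B\setminus\{a\})$. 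Writing $S_i=B\cap A_i$ and $u_i=\sum_{a\in S_i}\tilde p(a)$ for vendor $i$'s revenue, two cheap ingredients bracket the argument: by subadditivity of the marginals of the submodular $v$ over the disjoint pieces $A_i\setminus S_i$ (and $B\cup(A_i\setminus S_i)=B\cup A_i$),
\[ v(A^*)-v(B)\ \le\ \sum_{i=1}^k\big(v(B\cup A_i)-v(B)\big), \]
and since the buyer's surplus $u_b(B)\ge 0$, the total revenue obeys $\sum_i u_i=\sum_{a\in B}\tilde p(a)\le v(B)$.

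The heart of the proof, and the step I expect to be hardest, is the per-vendor bound $v(B\cup A_i)-v(B)\le H_m\,u_i$. To establish it I would fix the set $W=B\setminus A_i$ sold by the others and work with the monotone submodular $h(T)=v(W\cup T)$ on $T\subseteq A_i$. A deviation in which vendor $i$ offers $T$ yields revenue $R(T)=\sum_{a\in T}\big(h(T)-h(T\setminus\{a\})\big)$, because the maximal buyer still buys the whole offered set at marginal prices; so the equilibrium condition supplies $u_i\ge R(T)$ for \emph{every} $T\subseteq A_i$. Now for a price $p\ge 0$ let $Q_p\subseteq A_i$ be the maximal maximizer of $h(Q)-p|Q|$. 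Each $a\in Q_p$ satisfies $h(Q_p)-h(Q_p\setminus\{a\})\ge p$ (otherwise dropping $a$ would strictly increase the buyer's surplus), so offering $Q_p$ is a deviation with $u_i\ge R(Q_p)\ge p\,|Q_p|$. Setting $p_j=\sup\{p:|Q_p|\ge j\}$ and applying this at prices just below $p_j$ gives $u_i\ge j\,p_j$, i.e.\ $p_j\le u_i/j$.

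To finish the per-vendor bound I would use the envelope identity $\int_0^{\infty}|Q_p|\,dp=h(A_i)-h(\emptyset)$ (integral of demand equals value), which rewrites the value as $\sum_{j=1}^{n_i}p_j$ since $|Q_p|$ is a non-increasing integer-valued step function. Therefore
\[ v(B\cup A_i)-v(B)\ \le\ h(A_i)-h(\emptyset)\ =\ \sum_{j=1}^{n_i}p_j\ \le\ u_i\sum_{j=1}^{n_i}\frac1j\ =\ H_{n_i}\,u_i\ \le\ H_m\,u_i, \]
where I used $h(A_i)-h(\emptyset)\ge h(A_i)-h(S_i)=v(B\cup A_i)-v(B)$ (monotonicity, $W\subseteq B$) and $n_i\le m$.

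Combining the three displays gives $v(A^*)-v(B)\le H_m\sum_i u_i\le H_m\,v(B)$, hence $v(A^*)\le(H_m+1)\,v(B)$, which is the asserted bound on $\mathit{PoA}_{\mathcal G_m}$. The obstacle worth emphasizing is exactly the per-vendor step: the naive idea of charging the lost welfare to greedy ``prefix'' deviations breaks down, because offering many items can collapse their marginal prices, so no fixed prefix is guaranteed to yield revenue proportional to its size times the smallest marginal. The uniform-price/demand-integral argument is what correctly trades ``few items at a high price'' against ``many items at a low price'' and makes the harmonic sum $\sum_j 1/j$ appear; I would also verify carefully the demand identity $\int_0^\infty|Q_p|\,dp=h(A_i)-h(\emptyset)$ and the maximality bookkeeping ensuring that each deviation set is in fact sold in full.
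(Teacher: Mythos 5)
Your proof is correct, and its central step takes a genuinely different route from the paper's. Both arguments share the same skeleton: reduce to the PMVC game via Theorem~\ref{thm:NE}, decompose the welfare gap vendor-by-vendor (you use subadditivity of marginals over the disjoint blocks $A_i\setminus S_i$; the paper proves the essentially equivalent hybrid inequality $\sum_i v(A_i,\mathbf{S_{-i}})\geq v(\mathbf{A})+(k-1)v(\mathbf{S})$ by induction in Lemma~\ref{lem:poa_bound2}), and bound the total equilibrium revenue by $v(B)$ (you via nonnegativity of the buyer's surplus, the paper via monotonicity and a telescoping submodularity bound). The divergence is the per-vendor harmonic bound. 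The paper's Lemma~\ref{lem:poa_bound1} derives it by averaging and peeling: since the equilibrium condition $u_i\geq\sum_{b\in B}m_b(B\setminus b,\mathbf{S_{-i}})$ holds for \emph{every} $B\subseteq A_i$, every such $B$ contains an element of marginal value at most $u_i/|B|$, and repeatedly deleting such an element produces an ordering of $A_i$ whose telescoping sum is at most $\sum_{t=1}^{n_i}u_i/t=H_{n_i}u_i$. You instead run a uniform-price/demand-curve argument: $u_i\geq p\,|Q_p|$ for the maximal maximizer $Q_p$ of $h(Q)-p|Q|$, combined with the envelope identity $\int_0^\infty|Q_p|\,dp=h(A_i)-h(\emptyset)$. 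Your version checks out --- the maximizers of $h(Q)-p|Q|$ form a lattice under submodularity so $Q_p$ is well defined, $|Q_p|$ is non-increasing by the standard exchange inequality, and the claim that a maximal buyer purchases the entire offered set at marginal prices is exactly the argument of Proposition~\ref{prop:discrete} --- and it has the appeal of isolating a reusable single-vendor fact (revenue of the best uniform price bounds the demand integral within a harmonic factor), at the cost of heavier machinery.

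One point needs correcting: your closing claim that the greedy ``prefix'' idea breaks down is mistaken. The failure mode you describe --- a fixed prefix of a fixed ordering whose marginals collapse --- is avoided precisely because the equilibrium condition applies to all $2^{n_i}$ deviation sets, so the ordering can be constructed \emph{adaptively}, always removing the item whose current marginal is at most the average $u_i/|B|$. That adaptive peeling is the paper's actual proof of Lemma~\ref{lem:poa_bound1}, and it reaches $H_{n_i}u_i$ with considerably less bookkeeping than the demand-integral argument.
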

\begin{proof}
  Consider a game $G=(v, \mathbf{A}=(A_1,\ldots,A_k)\}$ in  $\mathcal{G}_m$. It is enough to provide a lower bound on the minimal social
  welfare of a pure Nash equilibrium in the corresponding PMVC game: by
  Theorem~\ref{thm:NE}, this will establish a lower bound on the
  social welfare of a pure Nash equilibrium in $G$ as well. So let
  $\mathbf{S}=(S_1,\ldots,S_k)$ be a pure Nash equilibrium of the
  PMVC game. As $v(\cdot)$ is non-decreasing, we can assume without
  loss of generality that $|S_i|=\{a_i\}$, for some $a_i \in A_i$.

  Again by the assumption that $v(\cdot)$ is non-decreasing, we know
  that the optimal social welfare is obtained when all of $A^*$ is
  sold, so it is enough to upper bound $v(\mathbf{A})$ in terms of $v(\mathbf{S})$.

  We now show the following straightforward bound on the social
  welfare resulting from switching from $S_i$ to $A_i$:
  \begin{lemma}
\label{lem:poa_bound1}
    $v(A_i,\mathbf{S_{-i}}) \leq v(\emptyset,\mathbf{S_{-i}}) + H_{n_i} (v(S_i,\mathbf{S_{-i}}) -
    v(\emptyset,\mathbf{S_{-i}}))$, for all $i=1,\ldots,k$.
  \end{lemma}
  \begin{proof}
  As $\mathbf{S}$ is a Nash equilibrium, the profit from selling $a_{i}$ is higher than selling any set $B\subseteq A_{i}$. Using the definition of the pricing mechanism of the PMVC game, we know
  \[ v(S_i,\mathbf{S_{-i}}) - v(\emptyset, \mathbf{S_{-i}}) \geq
  \sum_{b \in B}m_b(B \setminus b,\mathbf{S_{-i}}), \quad\quad
  \text {for all } B \subseteq A_i \]
  
  By an averaging argument, this means that for all $B \subseteq A_i$, there exists an item $b \in
B$ such that
\begin{align}
\label{eq:bound1}
\frac{1}{|B|} (v(S_i,\mathbf{S_{-i}}) - v(\emptyset,
\mathbf{S_{-i}})) \geq m_b(B \setminus b,\mathbf{S_{-i}})
\end{align}
The above implies that there is a relabelling of the items in $A_i$,
so that: (1) $A_i=\{b_1,\ldots,b_{n_i}\}$, (2) $b_1 = a_1$, and (3) if
set $P_t = \{b_1,\ldots,b_t\}\cup \mathbf{S_{-i}}$ and $P_0=\mathbf{S_{-i}}$, the following holds:
\begin{align*}
  v(A_i,\mathbf{S_{-i}}) &= v(\emptyset,\mathbf{S_{-i}})+\sum_{t=1}^{n_i}m_{b_t}(P_{t-1}) \leq
  v(\emptyset,\mathbf{S_{-i}})+\sum_{i=1}^{n_i}\frac{1}{t}(v(S_i,\mathbf{S_{-i}}) - v(\emptyset,\mathbf{S_{-i}}))\\
& = v(\emptyset,\mathbf{S_{-i}})+H_{n_i}(v(S_i,\mathbf{S_{-i}}) - v(\emptyset,\mathbf{S_{-i}}))
\end{align*}
where the first equality follows from a simple telescopic series, and
the first inequality follows from Eq.~\ref{eq:bound1}.\qed
  \end{proof} 
Next, we show the following useful bound:
\begin{lemma}
\label{lem:poa_bound2}
  $\sum_{i=1}^k v(A_i,\mathbf{S_{-i}}) \geq v(A_i, \mathbf{A_{-i}}) +
  (k-1)v(S_i, \mathbf{S_{-i}}) $
\end{lemma}
\begin{proof}
  $L^{(t)}=(A_1,\ldots,A_t,S_{t+1},\ldots,S_k)$, for $t=1,\ldots,k$,
  and $L^{(0})=\mathbf{S}$. That is, $L^{(t)}$ is the strategy profile
  resulting from replacing the length-$t$ prefix of $\mathbf{S}$ with
  that of $\mathbf{A}$. 
  We prove by induction that
  \[\sum_{i=1}^t v(A_i,\mathbf{S_{-i}}) \geq v(L^{(t)}) +
  (t-1)v(\mathbf{S})\]
and the lemma would follow by setting $t=k$.
  
  The inequality clearly holds for $t=1$, due to the monotonicity of
  $v(\cdot)$. Assume  that the inequality holds for $t<k$. Thus, for $t+1$,
  we have:
  \[ \sum_{i=1}^{t+1} v(A_i,\mathbf{S_{-i}}) \geq v(L^{(t)}) +
  (t-1)v(\mathbf{S}) + v(A_{t+1},\mathbf{S_{-(t+1)}})\]
By the second definition of submodularity, we know $v(L^{(t)}) +
v(A_{t+1},\mathbf{S_{-(t+1)}}) \geq v(L^{(t+1)}) +v(\mathbf{S})$. Putting this in the preceding inequality concludes the proof.\qed
\end{proof}
We can also prove an upper bound on the optimal social welfare in terms
of the social welfare of $\mathbf{S}$. By the above two lemmas, we
get:
\begin{align*}
  v(\mathbf{A}) &\leq \sum_{i=1}^k v(A_i,\mathbf{S_{-i}})- (k-1)v(\mathbf{S})\\
  &\leq \sum_{i=1}^kv(\emptyset,\mathbf{S_{-i}}) +
  \sum_{i=1}^k H_{n_i}(v(S_i,\mathbf{S_{-i}}) -
  v(\emptyset,\mathbf{S_{-i}})) - (k-1)v(\mathbf{S})\\
& \leq \sum_{i=1}^kv(S_i,\mathbf{S_{-i}}) +
  \sum_{i=1}^k H_{n_i}(v(S_i,\mathbf{S_{-i}}) -
  v(\emptyset,\mathbf{S_{-i}})) - (k-1)v(\mathbf{S})\\ 
&= v(\mathbf{S})
  +\sum_{i=1}^k H_{n_i}(v(S_i,\mathbf{S_{-i}}) - v(\emptyset,\mathbf{S_{-i}})),
\end{align*}
where the third inequality follows from the monotonicity of
$v(\cdot)$.

By submodularity and that $n_i\leq m$ for $i=1,\ldots,k$, 
\begin{align*}
  v(\mathbf{A}) &\leq v(\mathbf{S}) + H_{m}\sum_{i=1}^{k}(v(S_i,\mathbf{S_{-i}}) - v(\emptyset,\mathbf{S_{-i}})) \\
&\leq v(\mathbf{S}) + H_m v(\mathbf{S})= v(\mathbf{S})(H_m +1),
\end{align*}
which establishes our upper bound on the PoA. 
\end{proof}
We also give an example of a game with
a pure Nash equilibrium that matches the above bound.
\begin{theorem}
\label{prop:POA-lb}
There exists a game in $\mathcal{G}_{m}$ with a price of anarchy
  of $H_m$.
\end{theorem}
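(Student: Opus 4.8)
The plan is to exhibit a single explicit game realizing the upper bound, and the cleanest choice is a single-vendor instance ($k=1$) in which the lone vendor owns all $m$ items, $A_1 = A^* = \{a_1,\dots,a_m\}$. I would take a \emph{symmetric} valuation that depends only on cardinality, $v(S) = H_{|S|}$, where $H_t = \sum_{s=1}^t 1/s$ and $H_0 = 0$. First I would check this is a legal valuation: it is non-decreasing since $H_t$ is increasing, and it is submodular because the marginal value of the $t$-th item is exactly $1/t$, which is decreasing in $t$ --- equivalently, $t \mapsto H_t$ is concave, and cardinality-concave functions are submodular.

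The heart of the argument is to compute the monopolist's optimal revenue exactly and show it equals $1$. For the upper bound, if the buyer purchases a set $X$ then for every $a \in X$ maximality (indifference to dropping $a$) forces $p(a) \le v(X) - v(X \setminus a) = 1/|X|$, so the revenue $p(X) \le |X| \cdot \frac{1}{|X|} = 1$. For achievability at \emph{any} target size $s$, I would price a chosen set of $s$ items uniformly at $1/s$: the buyer's surplus from the whole set is $H_s - 1$, and for each subset $T$ the inequality $H_s - H_{|T|} = \sum_{j=|T|+1}^{s} 1/j \ge (s-|T|)/s$ shows the full set is (weakly) preferred, so the buyer buys all $s$ items at total revenue $1$. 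Hence the vendor is indifferent across all offering sizes, each yielding revenue exactly $1$.

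With this in hand I would exhibit the bad equilibrium directly in the VC game: price item $a_1$ at $1$ and every other item prohibitively high (say at $1$, which already exceeds every marginal $1/j < 1$ for $j \ge 2$). A direct check shows the utility of any set is maximized, at value $0$, by $\{a_1\}$ and $\emptyset$; by maximality the buyer buys exactly $\{a_1\}$. This yields revenue $1$, which equals the monopolist optimum, so it is a (trivially best-responding) pure Nash equilibrium with social welfare $v(\{a_1\}) = H_1 = 1$. Since the optimal welfare is $v(A^*) = H_m$ (attained by pricing everything at $0$), and no equilibrium can sell zero items (that would give revenue $0 < 1$, not a best response, whence every equilibrium sells at least one item and so has welfare at least $H_1=1$), the worst-case equilibrium welfare is exactly $1$ and the price of anarchy of this game equals $H_m / 1 = H_m$.

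I expect the main obstacle to be the exact revenue computation --- specifically, arguing that selling a single item is a globally optimal response for the vendor rather than merely a local one, and handling the tie-breaking so that the buyer purchases exactly one item under the maximality assumption. The telescoping harmonic inequality needed for achievability, and the observation that every equilibrium sells at least one item (pinning the worst-case welfare to exactly $1$), are the technical points to get right; everything else reduces to the definitions. One may equivalently phrase the construction in the PMVC game, where marginal pricing makes every subset offering earn revenue $1$, so that offering a single item is a PMVC equilibrium of welfare $1$, and then invoke the relationship between the two games.
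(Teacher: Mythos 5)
Your proof is correct and takes essentially the same approach as the paper's, specialized to $k=1$: the paper's construction uses $k$ symmetric vendors each owning $m$ items with valuation $v(\mathbf{T})=\sum_{i}\ell(T_i)$, $\ell(T_i)=H_{|T_i|}$ --- i.e., $k$ independent copies of your single-vendor game --- and its equilibrium argument is the same harmonic revenue cap (any bought set $B$ with revenue above $1$ must contain an item priced above $1/|B|$, which the buyer would drop). One immaterial slip: with all other items priced at $1$, \emph{every} singleton (not only $\{a_1\}$ and $\emptyset$) attains utility $0$, but any maximal demanded set is then some singleton yielding revenue and welfare exactly $1$, so the conclusion stands (pricing the unsold items at $v(A^*)+1$, as the paper does, removes the tie altogether).
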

\begin{proof}
Our counter-example is obtained by making the bound of
Lemma~\ref{lem:poa_bound1} tight. Consider a game $G =
(v,\mathbf{A}=(A_1,\ldots,A_k))$, in which $|A_i|=m$, for
$i=1,\ldots,k$.

We define the valuation function as follows. For a strategy
profile $\mathbf{T}=(T_1,\ldots,T_k)$, we set
$v(\mathbf{T})=\sum_{i=1}^k\ell(T_i)$, where $\ell(T_i)=0$ if
$|T_i|=0$, and otherwise we set $\ell(T_i)=H_{|T_i|}$. Observe that the vendors are all symmetric, and that
furthermore, the payoffs only depend on their own prices.

We now consider the following strategy profile $\mathbf{p}$.
Pick an arbitrary item
$a_i$ from each $A_i$, for $i=1,\ldots,k$, and set $p(a_i)=1$. Price the
remaining items at $v(A^*)+1$. Note that the payoff
of each vendor is precisely $1$.\footnote{If the buyer is not
  maximal, we can decrease the prices of the $a_i$'s by some small $\epsilon$.}

It is easy to see that $\mathbf{p}$ is a pure Nash
equilibrium. Indeed, suppose that it is not, and let $i$ be an
arbitrary vendor. Then he has an
alternative pricing $p'_i \neq p_i$, such that deviating to it would
improve his payoff of $1$. Suppose that the set of items being bought under a deviation
to $p'_i$ is $B=X(v;p'_i,\mathbf{p'_{-i}})$, such that $\sum_{a \in B}p(a)>1$. Then
there exists an item $b \in B$, such that $p(b) > 1/|B|$. But then by the
definition of the valuation function we have:
\begin{align*}
  \ell(B) - p(B) = H_{|B|} - p(B \setminus b) - p(b) <
  H_{|B|-1} - p(B \setminus b)
\end{align*}
contradicting the assumption that the buyer ends up buying set $B$.
\qed
\end{proof}
Note that the above construction can be extended to show that the
Price of Stability (PoS) is identical:
\begin{corollary}
  The Price of Stability of the class of games $\mathcal{G}_m$ is $\Omega(H_m)$.
\end{corollary}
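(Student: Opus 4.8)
The plan is to observe first why the construction of Theorem~\ref{prop:POA-lb} does \emph{not} already bound the price of stability: in that game each vendor's revenue is exactly $1$ no matter how many items it offers, so besides the bad equilibrium (every vendor sells one item) there is also an \emph{efficient} equilibrium in which every vendor sells all of $A_i$, giving social welfare equal to the optimum and hence $\mathrm{PoS}=1$. To push the \emph{best} equilibrium down to the level of the worst one, I would perturb the valuation so that offering a single item becomes each vendor's \emph{strict} best response; then all equilibria coincide in welfare and the PoS lower bound follows from the PoA argument verbatim.

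Concretely, I would keep the separable structure $v(\mathbf{T})=\sum_{i=1}^k \ell(|T_i|)$ but replace the marginals: set $\ell(0)=0$, and let the $t$-th marginal be $c_1=1$ and $c_t=(1-\epsilon)/t$ for $t\ge 2$ (so $\ell(t)=1+(1-\epsilon)(H_t-1)$), for a small fixed $\epsilon\in(0,1)$. Since $c_1>c_2>\cdots$ and all $c_t>0$, each $\ell$ is concave and non-decreasing, so $v$ is monotone and submodular, and payoffs again depend only on a vendor's own offered set. The key computation is that in the PMVC game a vendor offering $t$ items has each item priced at its marginal $c_t$, the (maximal) buyer purchases all $t$, and the vendor's payoff is $t\,c_t$; this equals $1$ for $t=1$ and $1-\epsilon<1$ for every $t\ge 2$. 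Hence selling exactly one item is the unique best response, and \emph{every} pure Nash equilibrium $\mathbf{S}$ of the PMVC game has $|S_i|=1$ for all $i$, with social welfare $v(\mathbf{S})=\sum_i\ell(1)=k$.

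To transfer this to the VC game and place the instance in $\mathcal{G}_m$, I would exhibit the explicit VC equilibrium in which each vendor prices one item at $1$ and the rest at $v(A^*)+1$; the game is therefore in $\mathcal{G}_m$, and by Theorem~\ref{thm:NE} every VC equilibrium corresponds to a PMVC equilibrium selling the same items, so every VC equilibrium likewise has welfare exactly $k$. Since the optimum is $v(\mathbf{A})=k\,\ell(m)=k\big((1-\epsilon)H_m+\epsilon\big)$, the best equilibrium is still a factor $\ell(m)=(1-\epsilon)H_m+\epsilon\ge(1-\epsilon)H_m$ below optimal, giving $\mathrm{PoS}_{\mathcal{G}_m}=\Omega(H_m)$ (letting $\epsilon\to 0$ in fact pushes it to $H_m$). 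The main obstacle I anticipate is the revenue verification that underlies ``single item is the strict best response'': I must check that no \emph{unequal} price vector lets a vendor extract more than $t\,c_t$ when selling $t$ items in the VC game. The clean way is to sort the chosen prices increasingly as $p_{(1)}\le\cdots\le p_{(t)}$; the incentive constraint that the buyer not drop its most expensive purchased item forces $p_{(t)}\le c_t$, hence every $p_{(j)}\le c_t$ and the total revenue is at most $t\,c_t$, which is maximized (strictly) at $t=1$. This is exactly the step that makes the argument bind at \emph{all} equilibria rather than a single one, which is what distinguishes the PoS bound from the PoA bound.
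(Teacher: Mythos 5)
Your proof is correct, and it follows the paper's high-level route: take the construction from Theorem~\ref{prop:POA-lb}, perturb the per-vendor valuation $\ell$ so that the inefficient single-item profile becomes essentially the unique equilibrium, and transfer the conclusion to the VC game. However, your perturbation differs from the paper's in a way that genuinely matters. The paper's sketch sets $\ell(T_i)=H_{|T_i|}-\epsilon$ for $|T_i|>1$; since this is an \emph{additive} shift, it cancels in consecutive marginals: for $t\geq 3$ the induced PMVC price is still $\ell(t)-\ell(t-1)=1/t$, and a vendor offering $t$ items still earns exactly $t\cdot(1/t)=1$. Only $t=2$ is penalized (revenue $1-2\epsilon$). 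Consequently, under the paper's perturbation the profile in which every vendor offers all $m\geq 3$ of his items remains a pure Nash equilibrium (in both the PMVC and the VC game) with welfare $k(H_m-\epsilon)$; the claimed uniqueness fails, the best equilibrium is nearly optimal, and the sketch, read literally, does not bound the price of stability. Your \emph{multiplicative} damping $c_t=(1-\epsilon)/t$ for $t\geq 2$ repairs exactly this defect: $t\,c_t=1-\epsilon<1=c_1$ uniformly over $t\geq 2$, so offering a single item is the strict best response at every cardinality, all PMVC equilibria have welfare $k$, and---via the one-directional mapping of Theorem~\ref{thm:NE}, which you correctly use in the only direction it provides---every VC equilibrium has welfare $k$ as well, against an optimum of $k\bigl((1-\epsilon)H_m+\epsilon\bigr)$.

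Two further points in your favor. First, your sorted-price argument (the buyer's refusal to drop the most expensive purchased item forces $p_{(t)}\leq \ell(t)-\ell(t-1)$, hence revenue at most $t\,c_t$) is the VC-level revenue verification that the paper's two-line sketch omits entirely, and exhibiting the explicit price-$1$ single-item equilibrium is what legitimately places the instance in $\mathcal{G}_m$. Second, your opening observation---that the unperturbed PoA construction has an \emph{efficient} equilibrium (every vendor sells everything at marginal prices, each still earning $1$), so it gives no PoS bound by itself---is precisely the right diagnosis of why a perturbation is needed. One small caveat worth stating explicitly: at prices $c_t$ the buyer is indifferent between buying $t-1$ and $t$ of a vendor's items, so your argument leans on the buyer-maximality assumption; you flag this in passing, and the paper assumes it throughout, so this is fine. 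In short: same strategy as the paper, but your perturbation is the one that actually works, and your write-up fills in the verifications the paper leaves out.
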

\begin{proof-sketch}
  Use the construction in the proof for Proposition~\ref{prop:POA-lb},
  but set $\ell(T_i)=1$ if $|T_i|=1$, and
  $\ell(T_i)=H_{|T_i|}-\epsilon$, if $|T_i|>1$, for sufficiently small
  $\epsilon$. It is not hard to show that the aforementioned pure Nash
  equilibrium is the \emph{only} Nash equilibrium. A similar bound follows.
\end{proof-sketch}
\commentout{
\jo{Begin Omer's old proof...}
Suppose we have $m$ players, each player $i$ selling items $C_{i}=\{a^{i}_{1},\ldots,a^{i}_{n_{i}}\}$. Suppose the worst valuation comes from each player selling the set $A_{i}\subset C_{i}$. As the valuation function is monotonic, we will assume $|A_{i}|=1$, containing just item $a_{i}$, and while there can be situation where $|A_{i}|>1$, a similar solution method works for them as well.

First, note that for every $B_{i}\subseteq C_{i}$ such that $a_{i}\in B_{i}$, thanks to $a_{i}$'s being a Nash equilibrium, we know:
\begin{equation*}
\begin{split}
&v(a_{1},\ldots,a_{i-1},a_{i},a_{i+1},\ldots,a_{m})-v(a1,\ldots,a_{i-1},a_{i+1},\ldots,a_{m})\geq \\ \geq &\sum_{b\in B_{i}}v(a1,\ldots,a_{i-1},B_{i},a_{i+1},\ldots,a_{m})-v(a1,\ldots,a_{i-1},B_{i}\setminus \{b\},a_{i+1},\ldots,a_{m})
\end{split}
\end{equation*}

I.e., for every $B_{i}\subseteq C_{i}$, there is a $b\in B_{i}$ such that
\begin{equation*}
\begin{split}
&v(a1,\ldots,a_{i-1},B_{i},a_{i+1},\ldots,a_{m})-v(a1,\ldots,a_{i-1},B_{i}\setminus \{b\},a_{i+1},\ldots,a_{m})\leq \\ \leq &\frac{1}{|B_{i}|}(v(a_{1},\ldots,a_{i-1},a_{i},a_{i+1},\ldots,a_{m})-v(a1,\ldots,a_{i-1},a_{i+1},\ldots,a_{m}))
\end{split}
\end{equation*}

Hence, there are items $c^{i}_{1},\ldots, c^{i}_{|C_{i}\setminus\{a_{i}\}|}\in C_{i}\setminus\{a_{i}\}$ such that
\begin{equation*}
\begin{split}
&v(a1,\ldots,a_{i-1},C_{i},a_{i+1},\ldots,a_{m})=-v(a1,\ldots,a_{i-1},B_{i}\setminus \{b\},a_{i+1},\ldots,a_{m})=\\=&v(a_{1},\ldots,a_{i-1},C_{i},a_{i+1},\ldots,a_{m})-v(a1,\ldots,a_{i-1},C_{i}\setminus\{c^{i}_{1}\},a_{i+1},\ldots,a_{m})) + \\ +& v(a1,\ldots,a_{i-1},C_{i}\setminus\{c^{i}_{1}\},a_{i+1},\ldots,a_{m}))-v(a1,\ldots,a_{i-1},C_{i}\setminus\{c^{i}_{1},c^{i}_{2}\},a_{i+1},\ldots,a_{m}))+\ldots +\\ + & v(a1,\ldots,a_{i-1},C_{i}\setminus\{c^{i}_{1},\ldots, c^{i}_{|C_{i}\setminus\{a_{i}\}|}\},a_{i+1},\ldots,a_{m}))-v(a1,\ldots,a_{m}))+v(a1,\ldots,a_{m})) \leq \\ \leq & (\frac{1}{n_{i}}+\frac{1}{n_{i}-1}+\ldots + \frac{1}{2})(v(a_{1},\ldots,a_{i-1},a_{i},a_{i+1},\ldots,a_{m})-v(a1,\ldots,a_{i-1},a_{i+1},\ldots,a_{m})) +v(a1,\ldots,a_{m}))\approx\\ \approx& \log(n_{i})(v(a_{1},\ldots,a_{i-1},a_{i},a_{i+1},\ldots,a_{m})-v(a1,\ldots,a_{i-1},a_{i+1},\ldots,a_{m}))+v(a1,\ldots,a_{m}))
\end{split}
\end{equation*}

Thanks to sub-modularity we know that
$$
\sum_{i=1}^{m}v(a1,\ldots,a_{i-1},C_{i},a_{i+1},\ldots,a_{m})\geq v(C_{1},\ldots,C_{m})+ (m-1)v(a1,\ldots,a_{m})
$$

and therefore:
\begin{equation*}
\begin{split}
&v(C_{1},\ldots,C_{m})+ (m-1)v(a1,\ldots,a_{m})\leq \sum_{i=1}^{m}v(a1,\ldots,a_{i-1},C_{i},a_{i+1},\ldots,a_{m})\leq \\ \leq&\sum_{i=1}^{m}(\log(n_{1})(v(a_{1},\ldots,a_{i-1},a_{i},a_{i+1},\ldots,a_{m})-v(a1,\ldots,a_{i-1},a_{i+1},\ldots,a_{m}))+v(a1,\ldots,a_{m}))\\
\Rightarrow & v(C_{1},\ldots,C_{m})\leq (1+\sum_{i=1}^{m}\log(n_{i}))v(a1,\ldots,a_{m}) -\sum_{i=1}^{m}(\log(n_{i})v(a1,\ldots,a_{i-1},a_{i+1},\ldots,a_{m}))\leq \\\leq& (1+m\max\log(n_{i}))v(a1,\ldots,a_{m}) -\max\log(n_{i})\sum_{i=1}^{m}v(a1,\ldots,a_{i-1},a_{i+1},\ldots,a_{m})
\end{split}
\end{equation*}

Once again, thanks to sub-modularity:
$$
\sum_{i=1}^{m}v(a1,\ldots,a_{i-1},a_{i+1},\ldots,a_{m})\geq (m-1)v(a1,\ldots,a_{m})
$$
(this is easily seen for $m=2^{t}$ for some $t\in\mathbb{N}$, as we continuously pair elements, creating one copy of  $v(a1,\ldots,a_{m})$ and a set containing 2 fewer elements than we paired. When $m$ is different, we can pair elements to be a complement of the ``left-off'' set.)

We thus can write
\begin{equation*}
\begin{split}
&v(C_{1},\ldots,C_{m})\leq 1+m\max\log(n_{i}))v(a1,\ldots,a_{m}) -\max\log(n_{i})\sum_{i=1}^{m}v(a1,\ldots,a_{i-1},a_{i+1},\ldots,a_{m}) \leq \\ \leq &1+m\max\log(n_{i}))v(a1,\ldots,a_{m}) -\max\log(n_{i})\sum_{i=1}^{m}(m-1)v(a1,\ldots,a_{m})=\\&=1+\max\log(n_{i}))v(a1,\ldots,a_{m})
\end{split}
\end{equation*}

This is a tight bound --- consider the following case:
\begin{itemize}
\item For $B_{i}\subseteq C_{i}$, $|B_{i}|\in \{0,1\}$, $v(B_{1},\ldots,B_{m})=|\{B_{i} | |B_{i}|=1\}|$.
\item For $B_{i}\subseteq C_{i}$, $|B_{i}|\in \{0,1\}$ and $D_{k}\subseteq C_{k}$ with some $d\in D_{k}$, $v(B_{1},\ldots,D_{k},\ldots,B_{m})=|\{B_{i} | |B_{i}|=1\}|+\frac{1}{|D_{k}|}(v(B_{1},\ldots,d,\ldots,B_{m})-v(B_{1},\ldots,B_{k-1},B_{k+1},\ldots,B_{m}))$.
\item For $D_{i}\in C_{i}$, we choose $d_{i}\in D_{i}$ and $v(D_{1},\ldots,D_{m})=\sum_{|D_{i}|>0}^{m}(v(d_{1},\ldots,d_{i-1},D_{i},d_{i+1},\ldots,d_{m}) - v(d_{1},\ldots,d_{m})) + v(d_{1},\ldots,d_{m})$.
\end{itemize}

This means taking $a_{i}\in C_{i}$, $v(a_{1},\ldots,a_{m})=m$. If every seller offers some items, profits are the same --- each seller has $1$. Selling everything results in $\sum_{i=1}^{m}\log(n_{i})+m$.

This is the same price of stability -- using the same example, we tweak the second item to read For $B_{i}\subseteq C_{i}$, $|B_{i}|\in \{0,1\}$ and $D_{k}\subseteq C_{k}$ with some $d\in D_{k}$, $v(B_{1},\ldots,D_{k},\ldots,B_{m})=|\{B_{i} | |B_{i}|=1\}|+\frac{1}{|D_{k}|}(v(B_{1},\ldots,d,\ldots,B_{m})-v(B_{1},\ldots,B_{k-1},B_{k+1},\ldots,B_{m}))-\epsilon$. This means $v(a_{1},\ldots,a_{m})$ is the sole Nash equilibrium, while the overall value of the grand coalition barely changes for small enough $\epsilon$.
}


\section{Special Case: Product Categories}
\label{semi-equivalent}

A particular VC gane of interest is one in which have classes of items that are
roughly equivalent; as such the buyer is interested
in at most one item from each class (e.g., TV sets of a
certain size, with different manufacturers and sets of feature). 
Items is different classes however are ``unrelated'' so
the buyer's valuation for any set of items is additive \emph{across} 
these classes.
This scenario reflects the common case of shops selling very similar products, of
which the buyer only needs one, and tries to understand how the model's pricing behaviour.

\begin{definition}
A Category-Divided Substitutable-Product Vendor Competition game (CDSP-VC) is a VC game with a buyer that has a category-product-substitutable valuation function: 
\begin{itemize}
\item $A^*$ is partitioned into $r$ pairwise-disjoint sets, $T^{(1)},\ldots, T^{(r)}$. That is, $T^{(i)}\cap T^{(j)}=\emptyset$ for $i\neq j$, and $\bigcup_{i=1}^{r}T^{(i)}=A^*$. We refer to each set $T^{(j)}$ for $j=1,\ldots,r$, as a category.
\item For $S\subseteq A^*$, $v(S)=\sum_{i=1}^{r}v(S\cap T^{(i)})$.
\item For $S\subseteq A^*$ and $1\leq i\leq r$, $v(S\cap T^{(i)})=max_{a\in S\cap T^{(i)}}v(a)$.
\end{itemize}
\end{definition}

Due to the additivity of different item classes, we can focus on the pricing dynamic within a specific category and easily generalize the results.
\begin{observation}
For a category $T^{(j)}$, regardless of the other vendors' strategies, no vendor can profit by selling any items other than his most valuable one in category $T^{(j)}$.
\end{observation}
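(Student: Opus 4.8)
The plan is to exploit the additive-across-categories structure to reduce the whole question to a single category $T^{(j)}$, and then argue that within that category a vendor can sell at most one item, which he would always prefer to be his most valuable one. First I would observe that both the buyer's valuation $v(S)=\sum_{i=1}^{r}v(S\cap T^{(i)})$ and his total payment $\sum_{a\in S}p(a)$ decompose additively across categories. Consequently the buyer's utility $u_b(S,\mathbf{p})$ splits into a sum of per-category terms, so a utility-maximizing buyer solves each category independently. This lets me ignore every category other than $T^{(j)}$, together with the items outside $T^{(j)}$, for the remainder of the argument.

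Next I would establish that within $T^{(j)}$ the buyer purchases at most one item, up to items that contribute zero revenue. Since $v(S\cap T^{(j)})=\max_{a\in S\cap T^{(j)}}v(a)$, once the buyer holds the most valuable offered item in the category, every further item in that category has marginal value $0$; under the maximality assumption such an item is purchased only at price $0$, contributing nothing to any vendor. Hence, for the purpose of revenue, each vendor sells at most one item per category.

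Then I would fix the strategies of all other vendors and define the competitive threshold $U=\max\{0,\max_a (v(a)-p(a))\}$, where the inner maximum ranges over the items of $T^{(j)}$ offered by the \emph{other} vendors; this quantity is independent of the vendor under consideration. Letting $a_1,\ldots,a_\ell$ be the vendor's own items in $T^{(j)}$ with $v(a_1)\ge\cdots\ge v(a_\ell)$, observe that if the buyer purchases the vendor's item $a$ at price $p$, its net utility must at least match the threshold, $v(a)-p\ge U$, whence $p\le v(a)-U\le v(a_1)-U$. Thus the vendor's revenue in the category never exceeds $v(a_1)-U$, a bound attained by offering $a_1$ alone at price (arbitrarily close to) $v(a_1)-U$. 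Since this value does not depend on whether the less valuable items $a_2,\ldots,a_\ell$ are offered, selling any item other than $a_1$ cannot strictly improve the vendor's payoff, which is exactly the claim.

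The main obstacle will be the careful treatment of ties and of the maximality assumption in the second and third steps: I must rule out that offering an additional, less valuable item in $T^{(j)}$ lets the vendor extract strictly more (it cannot, because such an item only adds an option for the buyer and has marginal value zero), and I must confirm that the weak inequalities at the threshold $U$ are consistent with the buyer's tie-breaking, so that the bound $v(a_1)-U$ is genuinely achievable. Everything else is a routine consequence of the additive structure.
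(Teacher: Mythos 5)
Your proof is correct, but it takes a genuinely different route from the paper's. The paper argues by a direct exchange: if under some price vector a less valuable item $b \in A_i \cap T^{(j)}$ is being bought, vendor $i$ can reprice his most valuable item $a$ at the \emph{same} price $p(b)$ (and price $b$ out of the market); since $v(a) \geq v(b)$, the buyer still buys, and revenue is weakly preserved. You instead prove a quantitative bound: after the (shared) per-category decomposition and the at-most-one-revenue-item observation, you introduce the competitive threshold $U=\max\{0,\max_{a \in T^{(j)}\setminus A_i}(v(a)-p(a))\}$, show any revenue vendor $i$ extracts in the category is at most $v(a_1)-U$, and argue this is attained by offering $a_1$ alone. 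Your approach buys more: the explicit cap $v(a_1)-U$ essentially anticipates the equilibrium prices in the paper's subsequent corollary (where the winner in each category charges his value minus the best competing value). What it costs is the achievability step you yourself flag: at exact ties, pricing $a_1$ \emph{at} the threshold runs into the buyer's tie-breaking, and ``arbitrarily close to $v(a_1)-U$'' only yields revenue approaching, not matching, the bound. Note that the clean repair is exactly the paper's swap: to show the deviation is weakly profit-preserving, price $a_1$ at the price $p(b)$ of the item actually being sold (which the buyer has already accepted), rather than at the supremum $v(a_1)-U$; this sidesteps the tie entirely except in the case $v(a_1)=v(b)$, where both your argument and the paper's rely on the same (glossed-over) tie-breaking behavior of the decision function $X$.
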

\begin{proof}
As the buyer only buys a single item from each category, vendors can make positive revenue from at most one item.
Consider a vendor $i$ and category $T^{(j)}$ such that $A_i \cap T^{(j)} \neq \emptyset$. 
Suppose that $a\in \arg\max_{b\in T^{(j)}\cap A_{i}}v(b)$, and consider a strategy profile $\mathbf{p}$. For any item $b \in (A_{i}\setminus a) \cap T^{(j)}$, if $b \in X(v;p_i,\mathbf{p_{-i}})$, then if vendor $i$ switches to a price vector $p'_i$, which prices $a$ the same as $b$ and prices $b$ at $v(A^*)+1$, then surely $a \in X(v;p'_i,\mathbf{p_{-i}})$, and the player's profit does not decrease.


\end{proof}
The above observation implies that within every category, each of the vendors is better off effectively trying to sell his highest valued item. In other words, we can assume without loss of generality, that for every vendor $i$ and category $T^{(j)}$, the vendor can pick an item $a^{(j)}_{i} \in \arg\max_{a \in T^{(j)} \cap A_i}v(a)$ (if such item exists) and set $p(b)=v(A^*)+1$, for all $b \neq a^{(j)}_i$, without incurring a loss as a result. Therefore, this reduces our game to $r$ independent special cases of the VC game, in which each vendor owns a \emph{single} item.

We turn to the result given by Babaioff et al.~\cite{BabaioffNL14} (Theorem~\ref{thm:babaioff14} in the preliminaries). Their result implies
the following characterization of the prices in a pure Nash equilibrium. 

\begin{corollary}
Every CDSP-VC game has a pure Nash equilibrium of the following form. For every category $T^{(j)}$, let $c_{i}^{(j)} = \arg\max_{a \in (T^{(j)}\cap A_{i})}v(a)$, and $w=\arg\max_{i} c_{i}^{(j)}$. Let $b^{(j)}=\arg\max_{a \in (T^{(j)}\setminus A_{w})}v(a)$ or $b^{(j)}=0$ if $|T^{(j)}|=1$. Then $p(c_{w}^{(j)})=v(c_{w}^{(j)}) - b^{(j)}$, and for every player $i\neq w$, $p(c_{i}^{(j)})=0$. For all other items $a\in T^{(j)}$, $p(a)=v(A^*)+1$.
\end{corollary}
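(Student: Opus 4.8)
The plan is to exploit the fact that a CDSP-VC game decomposes into independent per-category subgames and then to apply Theorem~\ref{thm:babaioff14} within each category. First I would record that, because $v$ is additive across categories and each vendor's revenue is the sum of the revenues it collects in each category, both the buyer's demand and every vendor's best response factorize across the $T^{(j)}$: the buyer maximizes $v(S\cap T^{(j)})-p(S\cap T^{(j)})$ independently in each category, and a deviation by vendor $i$ confined to category $j$ affects only what is sold in that category. Hence a price vector is a pure Nash equilibrium of the whole game if and only if its restriction to each category is a pure Nash equilibrium of the corresponding subgame, so it suffices to construct an equilibrium category by category and concatenate, pricing every remaining item at $v(A^*)+1$.

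Second, fix a category $T^{(j)}$. By the preceding Observation, no vendor gains from offering anything other than its most valuable item $c_i^{(j)}$ in the category, so I would reduce the subgame to the single-item-per-vendor VC game on the item set $C^{(j)}=\{c_i^{(j)}\}$ with $v$ restricted to $T^{(j)}$. Since $v(S\cap T^{(j)})=\max_{a\in S\cap T^{(j)}}v(a)$ is a unit-demand function, it is non-decreasing and submodular, so Theorem~\ref{thm:babaioff14} applies and yields a pure Nash equilibrium in which each offered item $c_i^{(j)}$ is priced at its marginal contribution $m_{c_i^{(j)}}(C^{(j)}\setminus c_i^{(j)})$, with equilibrium payoff equal to that marginal.

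Third, I would compute these marginals in closed form. Because only the maximum value in a category counts, $v(C^{(j)})=v(c_w^{(j)})$, where $w$ indexes the vendor whose best item attains the category maximum; for any $c_i^{(j)}$ with $i\neq w$, removing it leaves the maximum unchanged, giving marginal $0$ and hence price $0$. For the winner, $v(C^{(j)}\setminus c_w^{(j)})$ equals the largest value among the remaining pitched items, namely $\max_{i\neq w}v(c_i^{(j)})$; the key identification is that the most valuable item of $T^{(j)}$ not owned by $w$ is its owner's most valuable item in the category, so this maximum is exactly $b^{(j)}=\max_{a\in T^{(j)}\setminus A_w}v(a)$ (and $b^{(j)}=0$ when $|T^{(j)}|=1$). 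Thus the winner's marginal, and its price, is $v(c_w^{(j)})-b^{(j)}$, matching the claimed form; one checks the buyer indeed purchases $c_w^{(j)}$, since its net utility $b^{(j)}$ weakly dominates that of every runner-up priced at $0$.

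I expect the main obstacle to be purely notational: the statement conflates items with their values (writing $w=\arg\max_i c_i^{(j)}$ and treating $b^{(j)}$ as a number), so the careful part is the marginal-contribution computation together with the identification $\max_{i\neq w}v(c_i^{(j)})=\max_{a\in T^{(j)}\setminus A_w}v(a)$, plus a clean treatment of ties (if several vendors attain the category maximum then $v(c_w^{(j)})=b^{(j)}$, all prices are $0$, and the construction still yields a valid equilibrium). The decomposition across categories and the invocation of Theorem~\ref{thm:babaioff14} are then routine.
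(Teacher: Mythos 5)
Your proposal is correct and follows essentially the same route as the paper: use the additivity across categories to decompose into independent per-category subgames, invoke the Observation to reduce each to a single-item-per-vendor VC game, and apply Theorem~\ref{thm:babaioff14}, noting that $v(c_w^{(j)})-b^{(j)}$ is exactly the winner's marginal contribution while every runner-up's marginal is zero. Your explicit computation of the marginals and your handling of ties merely spell out details the paper's terse proof leaves implicit.
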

\begin{proof}
Once we know that each player sells only a single product, products that are not sold need to be priced high, and the rest of the result stems from Theorem~\ref{thm:babaioff14}, as the above difference constitutes the marginal contribution of the item $c^{(j)}$.
\end{proof}


\section{Conclusions and Future Work}

In analyzing the multi-item, multi-vendor problem, we began by defining a discrete game, which allowed us to consider a related case on the road to analyzing the scenario of a ``multiple item per vendor'' game. The main property of the discrete game was to transform player strategies from pricing, to selecting what items to sell. To paraphrase Clausewitz's famous dictum, displaying (what to sell) became pricing by other means.

Utilizing this discrete game, we were able to prove that a multi-item, multi-vendor game with submodular buyer valuations does not necessarily have a Nash equilibrium (unlike the ``single item per vendor'' model), and furthermore, even when an equilibrium exists, the equilibrium may provide only a logarithmic price of anarchy (again, unlike the single item per vendor model). Building on these results, we showed that in a particular (yet, in our view, common) category-substitute model, there will always be an efficient pure Nash equilibrium.

There are many open problems in this area, even before the ``holy grail'' of pricing multi-item multi-buyer scenarios. We believe that there is a need to establish the characteristics of valuation functions that guarantee the existence of a Nash equilibrium. This is true both in our scenario, where even when assuming submodular functions a Nash equilibrium is not assured, as well as in simpler scenarios, such as the one item per vendor model, where, while equilibrium is assured for submodular valuations, it is not known if that is a characterization of the properties required for an equilibrium.

Adding more buyer valuations changes the model significantly, as vendors do not simply construct some ``buyer in expectation'' and act according to it, but rather have a wider range of options to pursue (primarily bundling). Perhaps using a metric to define a set of similar, yet not identical, buyers, it might be possible to build on our results (and those of others), and construct extensions to the current model incorporating multiple buyers.

\bibliographystyle{splncs}
\bibliography{biblio}
\end{document}